\newtheorem{theorem}{Theorem}[section]
\newtheorem{lemma}[theorem]{Lemma}
\newtheorem{conjecture}[theorem]{Conjecture}
\newtheorem{proposition}[theorem]{Proposition}
\theoremstyle{definition}
\newtheorem{definition}[theorem]{Definition}
\newcommand{\doublehat}[1]{%
    % \settoheight{\dhatheight}{\ensuremath{\hat{#1}}}%
    % \addtolength{\dhatheight}{-0.35ex}%
    % \hat{\vphantom{\rule{1pt}{\dhatheight}}%
    % \smash{\hat{#1}}}}
    \hat{\hat{#1}}
}
\renewcommand{\d}{\,\mathrm{d}}
\newcommand{\h}{\hat}
\newcommand{\eps}{\varepsilon}
\definecolor{myblue}{rgb}{0.15, 0.1, 0.95}
\definecolor{mygreen}{rgb}{0.15, 0.65, 0.25}
\definecolor{myred}{rgb}{0.75, 0.25, 0.15}
\title{Auditing for Core Stability in Participatory Budgeting\thanks{Supported by NSF grant CCF-2113798.}}
\author{Kamesh Munagala\thanks{Department of Computer Science, Duke University, Durham, NC 27708-0129.  Email: \texttt{kamesh@cs.duke.edu}.} \and Yiheng Shen\thanks{Department of Computer Science, Duke University, Durham, NC 27708-0129.  Email:  \texttt{yiheng.shen@duke.edu}.} \and Kangning Wang\thanks{Simons Institute for the Theory of Computing, UC Berkeley, Berkeley, CA 94720-2190. Email: \texttt{kangning@berkeley.edu}. This work was done while the author was at Duke University.}
}
\date{}
\begin{document}
\maketitle

\begin{abstract}
We consider the \emph{participatory budgeting} problem where each of $n$ voters specifies additive utilities over $m$ candidate projects with given sizes, and the goal is to choose a subset of projects (i.e., a \emph{committee}) with total size at most $k$. Participatory budgeting mathematically generalizes multiwinner elections, and both have received great attention in computational social choice recently. A well-studied notion of group fairness in this setting is \emph{core stability}: Each voter is assigned an ``entitlement'' of $\frac{k}{n}$, so that a subset $S$ of voters can pay for a committee of size at most $|S| \cdot \frac{k}{n}$. A given committee is in the core if no subset of voters can pay for another committee that provides each of them strictly larger utility. This provides proportional representation to all voters in a strong sense.
In this paper, we study the following auditing question: Given a committee computed by some preference aggregation method, how close is it to the core? Concretely, how much does the entitlement of each voter need to be scaled down by, so that the core property subsequently holds? %We study the computational complexity of this question, and
As our main contribution, we present computational hardness results for this problem, as well as a logarithmic approximation algorithm via linear program rounding. We show that our analysis is tight against the linear programming bound. Additionally, we consider two related notions of group fairness that have similar audit properties. The first is \emph{Lindahl priceability}, which audits the closeness of a committee to a market clearing solution. We show that this is related to the linear programming relaxation of auditing the core, leading to efficient exact and approximation algorithms for auditing. The second is a novel weakening of the core that we term the \emph{sub-core}, and we present computational results for auditing this notion as well.

\end{abstract}

\section{Introduction}
The \emph{participatory budgeting} problem~\cite{cabannes2004participatory,aziz2021participatory,pbstanford,knapsackVoting,DBLP:conf/wine/FainGM16} is motivated by real-world elections where voters decide which projects a city should fund subject to a budget constraint on the total cost of these projects. In this problem, there are $m$ candidate projects forming a set $C$, and $n$ voters. Each candidate $j$ is associated with a size/cost $s_j$.

The \emph{multiwinner election} problem~\cite{AzizChapter,EndrissBook,VinceBook,CC,thiele1895om} is commonly seen in practice, and has received significant research attention recently. Mathematically, it is a specialization of the participatory budgeting problem, where each candidate is of the same unit size.

In both settings, our goal is to pick a subset $T \subseteq C$ of candidates -- which we call a \emph{committee} -- with total size at most a given value $k$, that is, $\sum_{j \in T} s_j \le k$. Each voter $i$ has a utility function $U_i(T)$ over subsets $T \subseteq C$ of candidates. In this paper, we assume the utility functions $\{U_i\}$ are \emph{additive} across candidates. For some of our results, we also look at the more restricted case of multiwinner elections with \emph{approval} (i.e. 0/1-additive) utilities: Each candidate is of unit size; each voter $i$ ``approves'' a subset $A_i \subseteq C$ of candidates, and for any committee $T$, the utility function of voter $i$ is simply $U_i(T) = |T \cap A_i|$, the number of approved candidates in the committee. We call this the {\sc Approval Election} setting.

\paragraph{Core Stability.} In both multiwinner elections and participatory budgeting, the methods used to aggregate preferences of voters are typically very simple, for instance, choosing the candidates who receive the most approval votes. This leads to a tension of such rules with \emph{fairness} of the resulting outcome in terms of proportional representation of minority opinions, and a social planner may want to quantify this tension for any given election.

A notion of fairness in this context, which has been studied for over a century, is that of \emph{core stability}~\cite{scarf1967core,DBLP:conf/ec/FainMS18,Droop,thiele1895om,lindahl1958just}. This captures a strong notion of proportional representation. Given a committee $W$ of size $k$, think of $k$ as a budget, and split it equally among the $n$ voters, so that each voter is entitled to a budget of $\frac{k}{n}$. For any subset $S \subseteq [n]$ of voters, their total entitlement is $|S|\cdot \frac{k}{n}$. If there is another committee $T$ of size at most the entitlement $|S| \cdot \frac{k}{n}$, such that each voter $i \in S$ strictly prefers $T$ to $W$, {\em i.e.}, $U_i(T) > U_i(W)$ for all $i \in S$, then these voters would have a justified complaint with $W$. A committee $W$ where no subset $S \subseteq [n]$ of voters have a justified complaint is termed {\em core stable}.

The core has a ``fair taxation'' interpretation~\cite{lindahl1958just,foley1970lindahl}. The quantity $\frac{k}{n}$ can be thought of as the tax contribution of a voter, and a committee in the core has the property that no sub-group of voters could have spent their share of tax money in a way that \emph{all} of them were better off. It subsumes notions of fairness such as Pareto-optimality, proportionality, and various forms of justified representation~\cite{JR,Sanchez,PJR2018} that have been extensively studied in multiwinner election and fairness literature.  Note that the core is {\em oblivious} to how the demographic slices are defined -- it attempts to be fair to {\em all} subsets of voters. This is a desirable feature in practice, since demographic slices are often not known upfront, and there could be hidden sub-groups that can only be inferred from voter preferences.

\paragraph{Approximate Stability.} The core is a very appealing group fairness notion; however, even in very simple settings, the core could be empty~\cite{DBLP:conf/ec/FainMS18}. This motivates  approximation, where the entitlement $\frac{k}{n}$ of each voter is scaled by a factor of $\theta$.

\begin{definition}
\label{def:approx_core}
For $\theta \le 1$, a committee $W$ of size at most $k$ lies in the \emph{$\theta$-approximate core} if for all $S \subseteq [n]$, there is no \emph{deviating committee} $T$ with size at most $\theta\cdot |S|\cdot \frac{k}{n}$, such that for all $i \in S$, we have $U_i(T) > U_i(W)$.
\end{definition}

It is known~\cite{DBLP:conf/stoc/JiangMW20} that a $\frac{1}{32}$-approximate core solution always exists for very general utility functions of the voters.  

\paragraph{Auditing for Approximate Stability.} Though the existence of approximate core solutions is a strong positive result, the algorithms for finding such solutions are often complex. Indeed, even in settings where the core is known to be always non-empty, for instance when candidates can be chosen fractionally~\cite{foley1970lindahl}, the non-emptiness is an existence result that needs an expensive fixed point computation. On the other hand, in practice, what are implemented are typically the simplest and most explainable social choice methods such as %approval voting or
Single Transferable Vote (STV). Therefore, from the perspective of a societal decision maker, such as a civic body running a participatory budgeting election, it becomes important to answer the following {\em auditing} question for any given election:

\begin{quote}
Given a committee $W$ of size at most $k$ found by some implemented preference aggregation method, how close is it to being core stable, i.e., what is the smallest value of $\theta_c$ such that $W$ {\em does not} lie in the $\theta_c$-approximate core for that instance?
\end{quote}

Note that if a committee $W$ lies in the core, then $\theta_c > 1$, else $\theta_c \le 1$.  Such an auditing question is useful even if the decision maker themselves is not sensitive for fairness because it allows for review of implemented decision rules via a third party or government agency. Further, the  set of deviating voters that correspond to the $\theta_c$-approximation yield a demographic that are unhappy with the current outcome, and this can be analyzed further by policy makers. 

We term the above question as the {\em core auditing} problem. In this paper, we study the computational complexity of core auditing. In that process, we define both stronger and weaker notions of fairness and audit these notions as well.

\subsection{Our Results}
\paragraph{Hardness and Approximation Algorithm.} We show in \cref{sec:hard} that for {\sc Approval Elections}, the value of $\theta_c$ in the core auditing problem is {\sc NP-Hard} to approximate to a factor better than $1 + \frac{1}{e} > 1.367$. We further show that this {\sc APX-Hardness} persists even when voters are allowed to choose a fractional deviating committee. We also show that the problem remains {\sc NP-Hard} when each voter approves a constant number of candidates, and each candidate is approved by a constant number of voters. These results significantly strengthen the {\sc NP-Hardness} result presented in~\cite{BrillGPSW20}.

On the positive side, in \cref{sec:approx1}, we design an $O(\min(\log m, \log n))$ approximation algorithm for the value $\theta_c$, where $m$ and $n$ are the number of candidates and voters respectively. We do this via linear program rounding. Our program (and indeed, our auditing question itself) is an interesting generalization of the densest subgraph problem~\cite{Charikar-dense}, where the goal is to choose a subgraph with maximum average degree. Given a graph, treat voters as edges and candidates as vertices that are approved by the incident edges; further assume any voter needs utility $2$ (that is, both end-points) in a feasible deviation. Then, the value of $\theta_c$ is precisely the density of the densest subgraph (to scaling). We combine ideas from the rounding for densest subgraph (where the rounding produces the integer optimum without approximation) with that from maximum coverage to design our rounding scheme. 
We further show that our linear program has an integrality gap of $\Omega(\min(\log m, \log n))$, showing that we cannot do any better against an LP lower bound. Our proof in \cref{sec:approx1} applies to the {\sc Approval Election} setting. We extend this to general candidate sizes and arbitrary additive utilities via knapsack cover inequalities in \cref{sec:general}, leading to an $O(\min(\log m, \log n))$ approximation factor. We finally note that both our hardness (see \cref{app:hard2}) and approximation results easily extend to settings where candidates can be fractionally chosen in the committees. 

It is an interesting question to close the gap between our hardness result (constant factor) and our approximation ratio. The difficulty lies with density problems in general, where hardness of approximation results have been hard to come by; see for instance, the $k$-densest subgraph problem~\cite{Khot-densest}.

\paragraph{Lindahl Priceability.} A closely related notion of fairness, considered in~\cite{lindahl1958just,foley1970lindahl,MunagalaSWW22,Peters21b} is that of committees that can be supported by market clearing prices. The notion of {\em Lindahl equilibrium} is a pricing scheme that strengthens the core, meaning that if the former exists, it lies in the core. In this scheme, each voter $i$ is assigned price $p_{ij}$ for candidate $j$, and these prices are such that for any candidate, the total price is equal to its size. If a voter buys their optimal set of candidates subject to the total price paid being at most their entitlement, $k/n$, then all voters choose the {\em same} committee. This is therefore a market clearing notion with per-voter prices such that the optimal voter action given these prices and equal entitlements results in a common committee being chosen. If committees could be chosen fractionally, it is known via a fixed point argument that the Lindahl equilibrium always exists~\cite{foley1970lindahl}. However, these need not exist when considering integer committees.

%The second main contribution of the paper is considering a set of new fairness notions related to the core, whose approximation factors can be efficiently (approximately) audited. These are Lindahl priceability and the sub-core, and we delve into these next. Though some of these fairness properties are weaker than the core, the landscape of efficient auditing need not be related to this aspect, which makes our auditing results interesting and non-trivial.

In this paper, we consider an integer version of this concept that we term {\em Lindahl priceability}. We show that this notion implies the core. As with the core, in \cref{sec:comp}, we define the approximation factor $\theta_{\ell}$ to which a given committee  satisfies Lindahl priceability, via scaling the entitlement $k/n$ of each voter by that factor. We show via LP duality not only that the quantity $\theta_{\ell}$ can be audited in polynomial time for {\sc approval elections}, but also that this computation coincides with the LP relaxation to the core auditing program. This results in a novel and somewhat surprising connection between the Lindahl priceability and the core for {\sc Approval elections}, where the approximation factor $\theta_{\ell}$ for Lindahl priceability is found via the LP relaxation to the program that computes the approximation factor $\theta_c$ for the core. Further, our approach easily extends to show computational results for general utilities and sizes.

Our notion is related to the {\em cost efficient Lindahl equilibrium} proposed recently in~\cite{Peters21b} for {\sc Approval elections}. However, there is a crucial difference: While they translate the fractional Lindahl equilibrium to the integer case, we translate the gradient optimality conditions implied by the fractional  equilibrium to the integer case. To illustrate that our definition is different, note that while there are simple instances of {\sc Approval elections} on which the former notion does not exist, we do not know such an instance for our definition. %They further show that for {\sc Approval elections}, determining if an outcome is Lindahl priceable is polynomial time solvable.

\paragraph{Weak Priceability and Sub-core.} In \cref{sec:subcore}, we finally connect our work to another notion of priceability first studied in~\cite{peters2020proportionality}. This notion is a relaxation of Lindahl priceability for {\sc Approval elections}, where voters cannot greedily augment the current committee given the prices and their entitlement. We term this ``weak priceability'' and use this to define a new relaxation of the core, termed {\em sub-core}, which only allows voters to deviate and gain utility from super-sets. We show that weakly priceable committees lie in the sub-core. Further, though the sub-core appears like a weak notion of fairness, we show that it remains {\sc NP-Hard} to audit. We finally present an $O(\min(\log m, \log n))$ approximation to the auditing question using same techniques as for auditing the core. 

In practice, committees found by social choice rules are likely to be much better approximations to the sub-core compared to the core. Hence, it is desirable to show a practitioner closeness to weaker notions of fairness such as the sub-core in addition to closeness to the core.

\subsection{Related Work}
\paragraph{Proportionality in Social Choice.} The earliest work that considers proportional representation dates back to the late 1800's~\cite{Droop}, and several voting rules attempting to achieve it, such as PAV~\cite{thiele1895om} and Phragm\'en~\cite{phragmen} rules also date back to then. There has been resurgence of interest in axiomatizing proportionality~\cite{CC,Monroe,Brams2007,JR,Sanchez,PJR2018} partly driven by real-world applications of such elections to areas such as participatory budgeting~\cite{pbstanford,knapsackVoting,aziz2021participatory}, and partly due to local bodies and countries implementing rules such as ranked choice voting that attempt to achieve proportionality, in their elections. These advances have made auditing fairness notions such as closeness to the core and weaker group fairness notions imperative in these settings.

\paragraph{Notions of Approximate Core.} In addition to the notion of approximation presented in \cref{def:approx_core}, a different notion allows deviating voters to use their entire entitlement, but requires them to extract at least a factor $\theta > 1$ larger utility on deviation. Under this notion, it is shown in~\cite{peters2020proportionality} that a classic voting rule called PAV~\cite{thiele1895om} achieves a $2$-approximation. This result was generalized to show a constant approximate core for arbitrary submodular utility functions and general candidate sizes in~\cite{MunagalaSWW22}. An analogous result for clustering was presented in~\cite{ChenFLM19}. Our work directly shows that this notion of approximation can be audited in a {\em bicriteria} fashion as follows: If the given committee is a $c$-approximation without violating entitlements, we can determine if it is a $c$-approximation had entitlements been violated by a factor of $O(\log m)$. It is an interesting open question to remove the bicriteria nature of this result.

\paragraph{Auditing for Fairness.} The question of auditing has become salient given the increasing democratization of societal decision making, for instance via processes like participatory budgeting. In the context of social choice, there are natural properties that are easy to achieve algorithmically but hard to audit. For instance, checking if an arbitrary outcome is Pareto-optimal is computationally hard~\cite{Aziz16}, while achieving it via some algorithm is easy. We take a further step in this direction by studying the approximate audit of arguably the strongest possible group fairness notion, the core, as well as related fairness properties.

Going beyond social choice, the notion of auditing for group fairness has gained prevalence in machine learning. Here, the ``voters'' are data points, and the ``committee'' is a classifier. We wish to audit if the classifier provides comparable accuracy for various demographic slices. The work of~\cite{Kearns} formulates and presents algorithms for this problem.

\section{Mathematical Program for $\theta_c$}
\label{sec:program}
For most of this paper, we consider the {\sc Approval Election} setting. Recall that in this setting, each voter $i$ ``approves'' a set $A_i \subseteq C$ of unit-sized candidates, and its utility for a committee $T \subseteq C$ is simply $U_i(T) = |A_i \cap T|$. Our hardness results hold even for this simple setting, while our approximation algorithms hold for general additive utilities and sizes (see \cref{sec:general} and \cref{sec:lindahl_fractional}).

We first present a mathematical program that computes $\theta_c$ given a committee $W \subseteq C$ of size at most $k$, as in \cref{def:approx_core}. In this program, there is a variable $z_i \in \{0,1\}$ that captures whether voter $i$ deviates, and a variable $x_j \in \{0,1\}$ that captures whether candidate $j$ is present in the deviating committee. If this is a feasible deviation, then the utility of each voter for which $z_i =1$ must strictly increase, which means %\yh{Do we need $C\cap A_i$ here since $A_i\subseteq C$?}
$
\forall i \in [n],\ \sum_{j \in A_i} x_j \ge (U_i(W) + 1)\cdot z_i.
$

Next, let $R = \frac{n}{k}$. Then, the budget available to the deviating voters is $\frac{1}{R} \sum_i z_i$, while the size of the committee to which they deviate is $\sum_{j\in C} x_j$. This means the entitlement $k/n$ of each voter must be scaled by a factor of $
R \cdot \frac{\sum_{j\in C} x_j}{\sum_i z_i},
$
so that the voters with $z_i = 1$ do not have enough entitlement to pay for this deviating committee. Since the goal is to have no deviations at all, the value $\theta_c$ is simply the solution to the following mathematical program:
\begin{gather*}
    \text{Minimize $R\cdot \frac{\sum_{j\in C} x_j}{\sum_i z_i}$},\ \text{s.t.}\\
    \forall i\in [n],\ \sum_{j \in C \cap A_i} x_j \ge z_i \cdot (U_i(W) + 1);\\
    \forall i\in [n],\ \forall j\in [m],\ x_j, z_i \in \{0,1\}.
\end{gather*}

The above program attempts to maximize the ratio of the number of constraints satisfied via setting $z_i$ to $1$, to the number of $x_j$ variables set to $1$. %We term this the {\sc Maximum Ratio CIP} problem.

\section{Hardness of Auditing the Core}
\label{sec:hard}

%\subsection{Core Auditing is {\sc NP-Hard}}
As mentioned before, all hardness results in this section apply to the {\sc Approval election} setting, where the utilities are binary, and candidate sizes are unit. We first show that the core auditing problem, that is, the problem of computing $\theta_c$ for a given committee $W$, is {\sc NP-Hard} even in a ``constant degree'' setting. This strengthens an {\sc NP-Hardness} result for the core in~\cite{BrillGPSW20}.

\begin{theorem} [Proved in \cref{app:hard}]
\label{thm:hard1}
Deciding whether a committee $W$ does not lie in the core (that is, deciding whether its $\theta_c \le 1$) is {\sc NP-Hard} when each voter approves at most $6$ candidates (that is, $|A_i| \le 6$ for all voters $i \in [n]$), and each candidate lies in at most $2$ of the sets $A_i$.
\end{theorem}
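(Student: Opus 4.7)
The plan is to reduce from a classical NP-hard problem on constant-degree structures --- for instance, Restricted Exact Cover by 3-Sets (RX3C), where each element appears in exactly $3$ of the $3$-element sets, or Vertex Cover on cubic graphs. Both are known to remain NP-hard under these degree restrictions, and their combinatorial structure matches well with the auditing condition: $\theta_c \le 1$ asks for a pair $(S, T)$ with $|T|/|S| \le k/n$ such that every voter in $S$ strictly improves over $W$. With a judicious choice of $W$, this can be made equivalent to selecting a sub-collection of ``sets'' (candidates) whose size-to-coverage ratio meets a specific threshold.

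First, I would set up the reduction so that each voter's current utility $U_i(W)$ is calibrated to force any voter who strictly improves to have essentially its entire approval set covered. Concretely, choosing $W$ so that $U_i(W) = |A_i| - 1$ for voters corresponding to source-problem elements, a voter participates in a deviation if and only if all of its approved candidates lie in $T$. The auditing inequality $|T|/|S| \le k/n$ can then be tuned, via the choice of $k$ and the total number of auxiliary voters, to exactly match the density of a YES instance of the source problem --- e.g., $1/q$ for RX3C on a universe of size $3q$.

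The main obstacle is the very restrictive constraint that each candidate lies in at most $2$ approval sets. Even starting from RX3C (where an element already appears in $3$ sets) or a bounded-degree graph, the straightforward mapping violates this bound. To handle it, I would introduce a ``splitter'' gadget for each over-subscribed element: replace a single candidate appearing in several voters' approval sets with a short path of auxiliary candidates linked by auxiliary voters whose small approval sets force all copies to be included together in any improving deviation. Each auxiliary voter approves at most $2$ consecutive copies, and its $U_i(W)$ is set so that including both is the only way to strictly improve; the path structure then keeps each candidate in at most $2$ approval sets while making the copies behave as a single logical candidate. The remaining delicate step is to verify that with these gadgets inserted, the counting in $|T|/|S|$ still corresponds exactly to the YES/NO answer of the source problem, which requires careful balancing of the sizes of the original and auxiliary parts of the instance. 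I expect this bookkeeping --- rather than the high-level reduction idea --- to dominate the technical work.
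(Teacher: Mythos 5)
Your high-level framing is right (reduce from a bounded-degree NP-hard problem, calibrate $U_i(W)$ so that improvement forces a specific combinatorial structure, and tune $k/n$ to a density threshold), but the two load-bearing components of your sketch do not hold up. First, the splitter gadget cannot ``force all copies to be included together in any improving deviation,'' because a deviating coalition $S$ is an arbitrary subset of voters: the auxiliary path voters simply decline to join $S$, and then nothing couples the copies. A coalition can buy a single copy $c_1$ together with the one original voter approving it, so the copies do not behave as one logical candidate, and the cost accounting changes from ``one candidate serving three voters'' to either ``one candidate per voter'' or ``three candidates plus two auxiliary voters per gadget,'' neither of which matches the density $1/q$ you want to hit. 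Second, your calibration is internally inconsistent with the RX3C correspondence: if $U_i(W)=|A_i|-1$ then an element-voter improves only when \emph{all} of its containing sets are in $T$, whereas an exact cover places each element in exactly \emph{one} chosen set; and if instead you set $U_i(W)=0$ so that one set suffices, the density objective is maximized by a single set (ratio $1/3$), so the test no longer detects whether $q$ sets cover the universe. Fixing that normally requires a global group of voters approving all main candidates (as in the paper's \cref{thm:standard_inapprox} reduction), which is exactly what the $|A_i|\le 6$ bound forbids. These are not bookkeeping issues; they are the reasons the reduction as proposed does not go through.

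The paper avoids both problems by choosing a source problem whose incidence structure natively satisfies the degree bounds: it reduces from deciding whether a Hamiltonian graph of maximum degree $6$ has a $3$-regular subgraph (Corneil). Edges become candidates (so each candidate is approved by exactly $2$ voters, with no gadgets), vertices become voters (so $|A_i|\le 6$), and the Hamiltonian cycle itself is the committee $W$, giving every voter utility exactly $2$ so that improvement means ``at least $3$ incident edges in $T$.'' With $k=3n/2$ the threshold $\theta_c\le 1$ becomes $|E'|\le \tfrac{3}{2}|V'|$, which together with the minimum-degree-$3$ requirement holds iff the chosen subgraph is $3$-regular. If you want to salvage your approach, the lesson is to pick a source problem where the candidate--voter incidence is an edge--vertex incidence, so the ``each candidate in at most $2$ approval sets'' constraint is automatic rather than something to be engineered.
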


%\paragraph{\textsc{APX-Hardness} of $\mathbf{\theta_c}$.} 
We now show that the core auditing problem is in fact {\sc APX-Hard}.

\begin{theorem}\label{thm:standard_inapprox}
For any constant $\gamma > 0$, approximating $\theta_c$ to within a factor of $1 + \frac{1}{e} - \gamma$ is {\sc NP-Hard}.
\end{theorem}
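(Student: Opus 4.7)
The plan is to give a gap-preserving reduction from Max $k$-Coverage (MkC), invoking Feige's celebrated $(1-1/e)$-inapproximability theorem: for every $\gamma' > 0$ it is {\sc NP-Hard} to distinguish MkC instances in which some $k$ sets cover the entire universe from those in which no $k$ sets cover more than a $(1-1/e+\gamma')$-fraction.

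Given an MkC instance with universe $U$, $|U|=N$, and collection $\mathcal{S}$ of $M$ sets, I would construct an {\sc Approval Election} instance by creating one voter $v_e$ for each element $e \in U$ with approval set $A_{v_e} = \{c_j : e \in S_j\}$ and one candidate $c_j$ for each set $S_j \in \mathcal{S}$, and choosing the committee $W$ so that $U_{v_e}(W) = 0$ for every voter (e.g., by letting $W$ be a small set of ``dummy'' candidates that nobody approves). Setting the audit budget to the MkC parameter $k$ yields $R = N/k$, and the program of \cref{sec:program} then reduces to minimizing $R \cdot |T|/s(T)$ over subcollections $T$ of $\mathcal{S}$, where $s(T) = |\{e \in U : T \cap A_{v_e} \neq \emptyset\}|$ is the MkC coverage of $T$.

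For completeness, any $k$-cover of $U$ in a YES instance yields a deviation with $|T|=k$ and $s(T)=N$, giving $\theta_c \le R \cdot k/N = 1$. For soundness, Feige's guarantee implies $s(T) \le (1-1/e+\gamma')\,N$ whenever $|T|=k$; combined with a balanced/smoothed variant of his construction, this extends to the density bound $s(T)/|T| \le (1-1/e+\gamma')\,N/k$ for every $T$, hence $\theta_c \ge 1/(1-1/e+\gamma') > 1+1/e-\gamma$ once $\gamma'$ is chosen small enough in terms of $\gamma$. A gap between $\theta_c^{\mathrm{YES}} \le 1$ and $\theta_c^{\mathrm{NO}} > 1+1/e-\gamma$ rules out any polynomial-time $(1+1/e-\gamma)$-approximation unless $\mathrm{P}=\mathrm{NP}$.

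The main technical obstacle is precisely the step ``extends to every $T$'': Feige's bound is on coverage at the single cardinality $k$, whereas our density formulation allows deviations of any size, and a single large set in the construction could already achieve density close to $N/k$ at $|T|=1$, wiping out the gap. I would address this by either (i) layering the MkC reduction inside a threshold gadget that forces each voter to require $t>1$ approved candidates (thereby eliminating small deviations and tying the density to the underlying MkC coverage), or (ii) invoking a balanced version of Feige's reduction in which every set has size exactly $N/k$ and every element is contained in the same number of sets, so that no subcollection can exceed the density $N/k$ achieved by the MkC optimum. Ensuring that the gadget preserves the YES/NO gap up to lower-order terms and verifying the density bound carefully for \emph{all} deviation sizes is the heart of the proof; the final constant $1/e$ in the hardness factor is inherited directly from the $(1-1/e)$-hardness of Max $k$-Coverage.
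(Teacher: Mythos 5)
Your reduction skeleton (one candidate per set, one voter per element, zero utility under $W$, budget equal to the Max-Coverage parameter) is the natural starting point, and you correctly flag the central obstacle: Feige's theorem bounds coverage only at cardinality $k$, while $\theta_c$ is a density over deviations of \emph{every} size. But neither of your proposed fixes closes that hole, and the soundness claim as written is false. Concretely, take the regular instance of \cref{lem:regular_cover}: every set has exactly $d = N/k$ elements, so the singleton deviation $T=\{S_j\}$ together with the $d$ voters it covers gives $R\cdot |T|/s(T) = (N/k)\cdot(1/d) = 1$ in YES and NO instances alike, i.e.\ $\theta_c \le 1$ always and the gap vanishes. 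Your fix~(ii) is self-defeating: regularity is precisely what guarantees that a single set already attains the optimal density $N/k$, so the bound ``$s(T)/|T| \le (1-1/e+\gamma')N/k$ for every $T$'' fails at $|T|=1$. Your fix~(i), forcing each element-voter to need $t>1$ approved candidates, changes the covering semantics to a multi-cover condition and is not shown to inherit Feige's hardness; it is a sketch of a gadget, not a gadget.

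The paper resolves this differently. It keeps the element-voters needing utility $1$, but adds an auxiliary group of $\frac{1}{e}qd$ voters who approve \emph{all} main candidates and already have utility $q-1$ under $W$ (paid for by disjoint dummy candidates placed in $W$). To satisfy an auxiliary voter, a deviation must contain at least $q$ main candidates or buy $(q-s)$ fresh dummy candidates per such voter, and a short calculation shows any deviation with $s<q$ main candidates has scaling factor at least $1$. The gap then comes from the \emph{YES} side, not the NO side: a covering family of $q$ sets satisfies all $(1+1/e)qd$ non-dummy voters at once, giving $\theta_c \le \frac{1}{1+1/e}$, whereas in NO instances every deviation has $\theta_c \ge \frac{1}{1+\eps}$. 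Note also that the target constant is a calibration of the auxiliary group's size, not the reciprocal of Feige's constant: your computation would yield a gap of $\frac{e}{e-1}\approx 1.582$, while the theorem's factor is $1+\frac{1}{e}\approx 1.368$. As it stands, your proof has a genuine gap at the soundness step.
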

We will reduce from the maximum set coverage problem on regular instances.
\begin{lemma} [Regular Maximum Coverage~\cite{feige1998threshold}]\label{lem:regular_cover}
The universe contains $qd$ elements. There are $\xi$ sets, each with $d$ elements. It is {\sc NP-Hard} to distinguish between the following two cases:
\begin{itemize}
	\item [1.] ``YES'' instances: There exist $q$ sets that cover the universe.
	\item [2.] ``NO'' instances: No collection of $q$ sets can cover $(1-1/e+\eps)\cdot qd$ elements.
\end{itemize}  
\end{lemma}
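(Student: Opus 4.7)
The plan is to reduce from the Regular Maximum Coverage problem of Lemma~\ref{lem:regular_cover}. Given a regular max-coverage instance with $qd$ elements and $\xi$ sets each of size $d$, I construct an {\sc Approval Election} in which each element $e$ of the universe gives rise to a voter $v_e$ and each set $S_j$ gives rise to a ``set-candidate'' $c_j$, with $v_e$ approving $c_j$ iff $e \in S_j$. In addition, I introduce a small number $k_0$ of ``dummy'' candidates that are approved by every element-voter, and let the committee $W$ consist of exactly these $k_0$ dummies. Consequently every voter has $U_{v_e}(W)=k_0$, so any deviating committee $T$ must give each of its voters at least $k_0+1$ approvals. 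The parameter $k_0$ (and the budget $k = k_0$) are tuned so that $R=n/k$ rescales $\theta_c$ into the right range.

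For the YES case, the explicit deviation consisting of all $k_0$ dummies together with the $q$ witness-cover sets satisfies every voter and achieves ratio $\theta_c \le \alpha_{\mathrm{YES}} := R(k_0+q)/(qd)$. For the NO case, I split the analysis over deviations $T$ according to $y_d = |T \cap W|$ and $t = |T\cap \{c_j\}|$. When $y_d = k_0$ (all dummies included), a voter $v_e$ deviates iff $e$ is covered by $T$'s set-candidates, so the ratio is $R(k_0+t)/g(t)$, where $g(t)$ is the maximum coverage by $t$ sets. Three bounds on $g(t)$ control the minimum of this ratio: \emph{(i)} the Feige bound $g(q) \le (1-1/e+\varepsilon)qd$ of Lemma~\ref{lem:regular_cover} at $t=q$, \emph{(ii)} the marginal-return bound $g(t)\le g(q)+(t-q)d$ for $t > q$ (each extra set adds at most $d$ elements), and \emph{(iii)} the trivial bound $g(t)\le td$ for $t<q$. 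Optimizing $R(k_0+t)/g(t)$ over $t$ using these bounds shows that its minimum is attained at a critical value $t \approx q + q/e$, where the marginal-return bound and the universe-size bound meet, and the value at this critical point is exactly $(1+1/e-o(1))\alpha_{\mathrm{YES}}$. When $y_d < k_0$ (some dummies omitted), each deviating voter needs $a := k_0 + 1 - y_d \ge 2$ approved set-candidates; an incidence-counting argument bounds the number of such voters by $td/a$, yielding a ratio at least $\approx a$ times the baseline, which comfortably exceeds $(1+1/e)\alpha_{\mathrm{YES}}$.

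Combining the two cases shows $\theta_c \ge (1+1/e-\gamma)\alpha_{\mathrm{YES}}$ in any NO instance, for $\gamma \to 0$ as $\varepsilon \to 0$ and the instance grows. Any $(1+1/e-\gamma)$-approximation to $\theta_c$ would then distinguish YES from NO instances of regular max coverage, proving \cref{thm:standard_inapprox}.

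The main obstacle is the delicate interaction between the three bounds on $g(t)$ in the $y_d=k_0$ case, together with the mixed case $y_d<k_0$. In particular, I need to ensure that the trivial bound $g(t)\le td$ at $t$ slightly less than $q$ does not allow a NO deviation to achieve a ratio close to $\alpha_{\mathrm{YES}}$, and that the $a=2$ incidence-counting bound strictly dominates. These require a careful calibration of $k_0$ against $q$ and $d$, so that the ``critical-point'' minimum at $t \approx q + q/e$ in the principal case is the one that determines $\theta_c$, yielding exactly the gap factor $1+1/e$.
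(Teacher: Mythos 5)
Your proposal does not prove the statement you were asked to prove. The statement is \cref{lem:regular_cover} itself --- the NP-hardness of distinguishing ``YES'' from ``NO'' instances of regular maximum coverage --- which is an external result of Feige~\cite{feige1998threshold} that the paper imports as a black box and never proves. Your write-up instead \emph{assumes} this lemma (``the plan is to reduce from the Regular Maximum Coverage problem of Lemma~\ref{lem:regular_cover}'') and uses it to derive \cref{thm:standard_inapprox}, the APX-hardness of auditing $\theta_c$. That is the downstream application, not the lemma; as an argument for the lemma it is circular, since you cannot establish the hardness of a problem by reducing \emph{from} that same problem. An actual proof of \cref{lem:regular_cover} would require the machinery of Feige's threshold paper --- the multi-prover proof system for \textsc{3SAT-5} together with the partition-systems construction that produces regular set systems with the $(1-1/e)$ coverage gap --- none of which appears in your proposal.

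As a secondary remark: even read as a proof of \cref{thm:standard_inapprox}, your construction diverges from the paper's. The paper uses a first group of $\frac{1}{e}\cdot qd$ voters who each approve $q-1$ disjoint dummy candidates plus all main candidates (so they are ``expensive'' to satisfy without buying $q$ main candidates), and obtains the $1+\frac{1}{e}$ gap directly from the ratio of satisfied-voter counts between the two cases; it does not need your three-way case analysis on $g(t)$ or the critical point at $t\approx q+q/e$. Your sketch leaves the calibration of $k_0$ against $q$ and $d$ unresolved, which you yourself flag as the main obstacle, so even for the theorem it is not yet a complete argument. But the primary defect is the mismatch of target: the lemma is a citation, and nothing in your proposal addresses it.
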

\begin{proof}[Proof of \cref{thm:standard_inapprox}]
    For each instance of the regular Max Covering Problem, there are $q d$ elements and $\xi$ sets. We construct the following instance for auditing the core:
    \begin{itemize}
        \item There are $\xi$ main candidates. Each candidate corresponds to a set. There are $\frac{1}{e} (q-1)q  d$ dummy candidates. 
        \item There are two group of voters. The first group contains $\frac{1}{e}\cdot q d$ voters. They each approve $q-1$ disjoint dummy candidates, and all the main candidates. 
        \item The second group contains $q d$ voters. Each of these voters corresponds to an element of the covering instance. She approves the main candidates whose corresponding set contains her corresponding element. Therefore, there are $(1+1/e) qd$ voters. Add dummy voters who do not approve any candidates, so that the total number of voters is $n = q(q-1)d^2$.
        \item The budget for committee selection is $k = (q-1)q d$. The current committee $W$ contains all the dummy candidates. All voters in the first group have utility $q-1$ while all voters in the second group have utility $0$ in $W$.
        \item Note that each voter is assigned a budget of $\frac{1}{R} = \frac{k}{n} = \frac{(q-1)qd}{q(q-1)d^2} = \frac{1}{d}$. 
    \end{itemize}

    If the maximum coverage instance is a ``YES'' instance, choose as the deviating committee the $q$ main candidates whose corresponding sets cover the universe. We call a voter ``satisfied'' if  her utility has strictly increased compared to the current committee $W$. From the program in \cref{sec:program}, $\theta_c$ is $R = d$ times the minimum ratio of the total number of selected candidates to the number of satisfied voters. Since we have selected $q$ candidates, the voters in the first group receive utility $q$ and are  therefore satisfied. Moreover, since the chosen candidates' corresponding sets cover the universe of $qd$ elements, the voters in the second group receive utility at least one, and are therefore satisfied. Therefore, 
    $$\theta_c\le R\cdot \frac{q}{qd\cdot (1+1/e)}=\frac{1}{1+1/e}.$$
    
    Suppose the maximum coverage instance is a ``NO'' instance. We will show that $\theta_c \ge 1-o(1)$. First suppose a deviating committee is composed of $s < q$ main candidates. These candidates can cover at most $d s$ voters from the second group. For the first group, they provide utility $s$ to each voter. If $t$ of these voters are satisfied, we must have chosen $(q-s)t $ dummy candidates. This means the scaling factor needed is at least
    $$ R \cdot  \frac{s + (q-s) t}{d s + t} = \frac{s + (q-s)t}{s + t/d} > 1.$$
    
    If the number of main candidates in the deviating committee is at least $q$, the voters in the first group are all satisfied and we don't need to choose dummy candidates. Consider an arbitrary $q$-candidate subset of these selected candidates. All voters in the first group are satisfied by these candidates, since they receive utility $q$ from them. Since the coverage instance is a ``NO'' instance, no more than $(1-1/e+\eps)\cdot kd$ voters in the second group are satisfied by this subset. Suppose there are $r$ remaining candidates in the deviation, each candidate can only increase the number of satisfied voters by at most $r \cdot d$. Therefore,  
    % \begin{align*}
    %     \theta_c&\ge R\cdot \frac{q+r}{r\cdot d+ (1-1/e+\eps)\cdot q \cdot d+(1/e)\cdot q\cdot d}\\
    %     & =\frac{R}{d}\cdot \frac{q+r}{r\cdot d+q\cdot d\cdot (1+\eps)} \ge\frac{1}{1+\eps}. 
    % \end{align*}
   \[
        \theta_c\ge R\cdot \frac{q+r}{r\cdot d+ (1-1/e+\eps)\cdot q \cdot d+(1/e)\cdot q\cdot d} =\frac{R}{d}\cdot \frac{q+r}{r\cdot d+q\cdot d\cdot (1+\eps)} \ge\frac{1}{1+\eps}. 
   \]
Since the gap of $\theta_c$ between the constructed auditing instance from ``YES''  instances and from ``NO'' instances is at least $\frac{1+1/e}{1+\eps}$, approximating $\theta_c$ to within this factor is {\sc NP-Hard}.
\end{proof}

\paragraph{Hardness of Auditing Fractional Committees.} One natural question is whether the above hardness stems from the integrality requirement on the committee (the $x_j$ variables in the program in \cref{sec:program}) or the voters (the $z_i$ variables). In \cref{app:hard2}, we show that the auditing problem remains hard to approximate to constant factors even when the committees can be chosen {\em fractionally}. This corresponds to allowing the variables $\{x_j\}$ to be fractional in $[0,1]$. This shows that the hardness of the problem stems mainly from insisting $\{z_i\}$ be integral. The proof of this result is similar to the previous proof. % via the Small Set Expansion Conjecture~\cite{SSEC}.

\section{A Logarithmic Approximation for Auditing the Core}
\label{sec:approx1}
Our main result in this section is the following theorem, which we prove for the {\sc Approval Election} setting. The proof for general candidate sizes and general additive utilities is presented in \cref{sec:general}.

\begin{theorem}\label{thm:log_approx}
Given a committee $W$ of size at most $k$, its $\theta_c$ value can be computed within $O(\min(\log m, \log n))$ factor in polynomial time, where $m,n$ are the total number of candidates and voters respectively.
\end{theorem}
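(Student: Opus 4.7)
The plan is to round the natural LP relaxation of the integer ratio program from Section 2. Relaxing $x_j, z_i \in \{0,1\}$ to $[0,1]$ and using homogeneity of the ratio objective, I would normalize $\sum_i z_i = 1$ to obtain an ordinary LP: minimize $\sum_j x_j$ subject to $\sum_{j \in A_i} x_j \ge z_i (U_i(W)+1)$ for all $i$, $\sum_i z_i = 1$, and $x_j, z_i \ge 0$. This LP is solvable in polynomial time and provides a lower bound $R s^\ast$ on $\theta_c$, where $s^\ast = \sum_j x_j^\ast$ at an optimum $(x^\ast, z^\ast)$.

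To round this solution, I would first bucket voters geometrically by $z_i^\ast$. Voters with $z_i^\ast < 1/(2n)$ carry at most half the total mass and can be discarded, and the remaining values partition into $O(\log n)$ geometric buckets $B_\ell = \{i : z_i^\ast \in (2^{-\ell-1}, 2^{-\ell}]\}$. By pigeonhole some bucket $B$ with representative scale $\beta$ carries $\sum_{i \in B} z_i^\ast \ge \Omega(1/\log n)$. Rescaling $y_j := \min(1, x_j^\ast/\beta)$ turns the LP restricted to $B$ into a pure covering LP with $\sum_j y_j \le s^\ast/\beta$ and $\sum_{j \in A_i} y_j \ge U_i(W)+1$ for all $i \in B$. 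A max-coverage-style rounding---include candidate $j$ independently with probability $\min(1, \alpha y_j)$ for $\alpha = \Theta(\log m)$---together with a Chernoff bound and a union bound yields, with constant probability, a committee $T$ of size $O(\log m \cdot s^\ast/\beta)$ in which every $i \in B$ obtains integer utility $\ge U_i(W)+1$. Since $|B| \ge \Omega(1/(\beta \log n))$, this first analysis already gives a ratio of $O(\log m \cdot \log n) \cdot s^\ast$.

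To tighten this to $O(\min(\log m, \log n))$, I would split by which logarithm dominates: when $\log m$ is smaller, replace the voter bucketing on $z_i^\ast$ by a single-threshold level-set argument in the style of Charikar's exact rounding for densest subgraph (which is tight for the densest-subgraph specialization pointed out in the introduction), absorbing the $\log n$ factor entirely; when $\log n$ is smaller, union-bound the coverage events over the $n$ voters themselves rather than over $m$ Chernoff tail events, absorbing the $\log m$ factor. The main obstacle is precisely the coupling of these two logarithmic losses---one from geometric bucketing of $z$, one from randomized rounding of $x$---so that a naive sequential analysis pays their product. Only by interleaving a densest-subgraph-style level-set argument with a coverage-style randomized rounding, as mirror-image refinements in the two regimes, does one of the logs survive rather than both. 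I would complement the algorithm with an $\Omega(\min(\log m, \log n))$ integrality-gap construction on a bipartite voter-candidate instance (with $\Theta(\log m)$ approvals per voter and an evenly spread fractional LP), echoing standard hard instances for set cover, showing the bound is tight against the LP.
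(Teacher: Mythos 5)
Your overall architecture (LP relaxation, geometric bucketing of the $z_i$, coverage-style randomized rounding of the $x_j$, the densest-subgraph connection, and a matching integrality gap) matches the paper's, but there are two genuine gaps. First, your LP relaxation is too weak. You keep only the naive constraint $\sum_{j\in A_i} x_j \ge z_i(U_i(W)+1)$, whereas the paper's relaxation introduces auxiliary variables $y_{ij}\le\min(x_j,z_i)$ and requires $\sum_{j\in A_i} y_{ij}\ge z_i(U_i(W)+1)$. This strengthening is exactly what makes your rescale-and-cap step valid: you claim that setting $y_j=\min(1,x_j^\ast/\beta)$ yields $\sum_{j\in A_i}y_j\ge U_i(W)+1$ for voters in the bucket, but under the naive constraint this fails. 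Concretely, a voter with $z_i=0.1$ and $U_i(W)+1=10$ can have her constraint satisfied by a single candidate with $x_j=1$; after capping, $\sum_j\min(1,x_j/\beta)=1\ll 10$, and no rounding of that fractional solution can give her integer utility $10$. With the paper's capped constraint, $\min(x_j,z_i)\le z_i$ forces at least $U_i(W)+1$ candidates to carry mass $\approx z_i$ each, and the covering property survives the rescaling. The same issue breaks the Chernoff step in your $\log m$ regime.

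Second, your mechanism for avoiding the $\log m\cdot\log n$ product is not the right one, and the fix you sketch for the $\log n$ regime does not work as stated: union-bounding coverage over the $n$ voters in the bucket still requires per-voter failure probability $O(1/n)$, hence a $\Theta(\log n)$ inflation of the rounding probabilities on top of the $\Theta(\log n)$ bucketing loss, giving $O(\log^2 n)$. The paper's key observation, which you are missing, is that you never need \emph{every} voter in the bucket to be covered. Since the objective is a ratio, it suffices that each selected voter is satisfied with constant probability conditioned on selection; then $\mathbb{E}[\sum_i \doublehat{z}_i]\ge(1-2/e)\sum_i z_i$ by linearity of expectation, and an averaging argument over the joint randomness finishes the proof. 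This removes the union bound entirely, so each regime pays only one logarithm: in the $O(\log m)$ regime a continuous random threshold $\alpha$ on the $z_i$ (Charikar-style, lossless on the voter side) is paired with rounding $\hat x_j=1$ with probability $\min(1,2x_j'/\alpha)$, whose expected cost is $O(\log m)\cdot x_j$ after flooring $x_j$ at $1/(2m^2)$; in the $O(\log n)$ regime the heaviest geometric bucket costs $O(\log n)$ but the $x_j$ are then rounded with only constant-factor inflation. Your level-set idea for the $\log m$ regime is in the right direction, but without the constant-expected-fraction argument and the strengthened LP it does not close.
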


%\begin{lemma} \label{lem:gap_upper}
%For any committee $W$, $\theta_p(W)\ge \frac{1-2/e}{3+6\ln n}\cdot \theta_c(W)$.
%\end{lemma}

\paragraph{LP Relaxation.} Given a committee $W$, we start with the mathematical program from \cref{sec:program} and relax the variables to be fractional. This yields the following program. To see that this is a relaxation, if $z_i = 0$ for some $i$, then %this forces all $y_{ij} = 0$, so that
the first constraint is trivially satisfied. On the other hand, if $z_i = 1$, then we can increase all $y_{ij}$ so that $y_{ij} = x_j$, thereby recovering the constraint in the integer program from  \cref{sec:program}. Therefore, any solution to the integer program is a feasible solution to the program below. %Since we can normalize all the variables and discard the last constraint on $z_i$, the LP turns into the following optimization:
\begin{gather*}
    \text{Minimize $R\cdot \frac{\sum_{j=1}^m {x_j}}{\sum_{i=1}^n z_i}$},\ \text{s.t.}\\
    \forall i\in [n],\ \sum_{j\in A_i} y_{ij}\ge z_i\cdot (U_i(W)+1);\\
    \forall i\in [n],\ \forall j\in A_i,\ y_{ij}\le x_j;\\
    \forall i\in [n],\ \forall j\in A_i,\ y_{ij}\le z_i; \\
    \forall i \in [n], j \in [m],\ x_j, z_i, y_{ij} \ge 0.
\end{gather*}
This can be written as a LP if we omit the denominator from the objective and add the constraint $\sum_i z_i \ge 1$, and hence can be solved in polynomial time.

Denote $u_i = U_i(W)$. For the committee $W$, we further denote 
\begin{equation}
    \label{eq:thetap}
    \theta_p = R\cdot \frac{\sum_{j=1}^m {x_j}}{\sum_{i=1}^n z_i}
\end{equation}
where the variables are set based on the optimal solution to the linear relaxation. Therefore, $\theta_p \le \theta_c$. We will now prove \cref{thm:log_approx} by showing that $\theta_p$ is an $O(\log m)$ approximation to $\theta_c$.

\subsection{Proof of \cref{thm:log_approx}}
\label{sec:roundproof}
By scaling the above program, we can assume that $\max_i\{z_i\} = 1$. Therefore, all $y_{ij} \leq z_i \leq 1$ and $x_j = \min_{i: j \in A_i}\{y_{ij}\} \leq 1$: all the variables are in the range $[0,1]$.

\paragraph{$O(\log m)$ Approximation.} Given the fractional solution, we note that $y_{ij} = \min(x_j, z_i)$. We now construct an integral solution  by the following steps:
\begin{itemize}
    \item [1.] Pick $\alpha \in [0,1]$ uniformly at random. If $z_i> \alpha$, set $\hat{z}_i=1$; else $\hat{z}_i=0$.
    \item [2.] Let $x_j'=\max\{\frac{1}{2m^2},x_j\}$.
    \item [3.] If $2x_j'> \alpha$, then set $\hat{x}_j=1$; else set $\hat x_j=1$ with probability $2x_j'/\alpha$. We round each $\hat{x}_j$ independently.
    \item [4.] If $\hat z_i=1$, check if $\sum_{j\in A_i} \min\{\hat{x}_j,\h z_i\}\ge u_i$. If so, set $\doublehat{z}_i=1$; else set $\doublehat{z}_i=0$.
\end{itemize}

Suppose the largest $z_i$ is $z_{i^*} = 1$, we have $\sum_{j\in A_{i^*}} y_{ij} \ge 1$. Therefore, for some $j$, $y_{i^*j}\ge 1/m$. Therefore $\sum_{j=1}^m {x_j}\ge \frac{1}{m}.$ Since Step 2 increases $\sum_j x_j$ by at most $\frac{1}{2m}$, we have  $\frac{\sum_j{x_j'}}{\sum_j{x_j}}\le 3/2.$ 

We first bound the expectation of $\hat{x}_j$. If $x_j'<1/2$, since $x_j'\ge \frac1{2m^2}$, we have:
\begin{align*}
    \mathbb{E}[\hat{x}_j]&=\int_{\alpha=0}^{2x_j'} 1 \, \d\alpha+\int_{\alpha=2x_j'}^1 2x_j'/\alpha\, \d \alpha =2x_j'+2x_j'\cdot \ln \alpha \Big\rvert_{2x_j'}^1 \le 2x_j'\cdot (1+2\ln m).  
\end{align*}
Therefore, we have 
\[
\mathbb{E}\Big[\sum_j \h x_j\Big]\le 2(1+2\ln m)\sum_j x_j'\le 3(1+2\ln m)\sum_j x_j.
\]

We now bound $\mathbb{E}\left[\sum_i \doublehat{z}_i\right]$. Let $P_i\triangleq\{j\in A_i: 2x_j'<\alpha\}$, 
$Q_i\triangleq\{j\in A_i:2x_j'\ge \alpha\}$. Since $\h x_j = 1$ for $j \in Q_i$, conditioned on $\h z_i = 1$, we have:
\begin{align*}
    \Pr\left(\doublehat{z}_i=0\right)&= \Pr\left(\sum_{j\in A_i} \min\{\h x_j,\h z_i\}< u_i+1 \right) =\Pr\left(\sum_{j\in P_i} \h x_j< u_i+1-|Q_i| \right).
\end{align*}

By the constraints in the optimization and since $y_{ij} = \min(x_j, z_i)$, we have 
$$\sum_{j\in P_i} \min\{x_j,z_i\}+\sum_{j\in Q_i}\min\{x_j,z_i\}\ge z_i\cdot(u_i+1).$$ 
Since the second term is capped by $z_i\cdot |Q_i|$, we have  $\sum_{j\in P_i} x_j\ge z_i\cdot \big((u_i+1)-|Q_i|\big).$
When $\h z_i=1$, we have $\alpha<z_i$, and thus 
$$\mathbb{E}\left[\sum_{j\in P_i}\h x_j\right]\ge 2\cdot\left(\sum_{j\in P_i} x_j'\right)\big/\alpha\ge 2\cdot \big((u_i+1)-|Q_i|\big)\cdot z_i/\alpha\ge 2\cdot \big((u_i+1)-|Q_i|\big).$$ 
By Chernoff Bounds on the independent binary random variables $\{\h x_j\}$, we have 
$$
\Pr\left(\sum_{j\in P_i} \h x_j< u_i+1-|Q_i|\,\Big\vert\, \h z
    _i=1\right)<\left(\frac{e^{-1/2}}{\sqrt{1/2}}\right)^2=2/e.
$$
Therefore, we have
\begin{align*}
\mathbb{E}\left[\sum_i \doublehat{z}_i\right]&= \sum_i \mathbb{E}\Big[\hat{z}_i\cdot \big(1-\Pr(\doublehat{z}_i=0\big)\Big] \ge \sum_i \mathbb{E}\Big[\hat{z}_i\cdot (1- \frac{2}{e})\Big] \ge (1- \frac{2}{e})\cdot \sum_i z_i. 
\end{align*}

Since $\{\h x_j\}$ and $\{\doublehat{z}_i\}$ form a valid solution to the program in \cref{sec:program}, there exists a setting of these variables such that
\[
\frac{\sum_j {\h x_j}}{\sum_i \doublehat{z}_i}\le \frac{\mathbb{E}[\sum_j {\h x_j}]}{\mathbb{E}[\sum_i \doublehat{z}_i]} \le  \frac{3(1+2\ln m)}{1-2/e}\cdot \frac{\sum_j x_j}{\sum_i z_i} = \frac{3(1+2\ln m)}{1-2/e}\cdot\theta_p.
\]
Therefore, we have $\theta_p \le \theta_c\le \frac{3(1+2\ln m)}{1-2/e}\cdot \theta_p$, completing the proof of the $O(\log m)$ approximation.

\paragraph{$O(\log n)$ Approximation.} We round the linear programming solution as follows. Recall that we assume  by scaling that the largest $z_i$ is $1$, and all $x_j, z_i \in [0,1]$.
\begin{enumerate}
    \item Construct $\omega=\lfloor\log_2 n\rfloor$ intervals: $I_0=[0, \frac{1}{n}], I_1=(\frac{1}{n}, \frac{2}{n}], \ldots, I_{\omega}=(\frac{2^{\omega}}{n}, 1]$. 
    \item If $z_i \in I_0$, set $\h z_i=0$. 
    \item Let $\ell^*=\arg\max_{\ell} \sum_{i:z_i\in I_\ell} z_i$. Let $I_{\ell^*}=(L^*, R^*]$. For all $z_i\notin I_{\ell^*}$, set $\h z_i=0$. 
    \item Set $x_j' =\min \{1, 2 x_j/L^*\}$. For each $j$ independently, set $\h x_j=1$ with probability $x_j'$ and $0$ otherwise.
    \item For each $i\in I_{\ell^*}$, if $\sum_{j\in A_i} \h x_j\ge (u_i+1)$, set $\h z_i = 1$, else set $\h z_i = 0$. 
\end{enumerate} 
We claim that 
$$\frac{\sum_{j=1}^m \hat{x}_j}{\sum_{i=1}^n \hat{z}_i}=O(\omega) \cdot \frac{\sum_{j=1}^m x_j}{\sum_{i=1}^n z_i}.$$ 
The second step will at most halve $\sum_{i=1}^n z_i$, since there are at most $n$ values $z_i$ in $I_0$ and the largest $z_i$ is normalized to 1. Further, the third step will at most multiply the objective function by $2\omega$. Let $Q = |\{i \mid z_i \in I_{\ell^*}\}|$. For the fourth step, we have:
$$ \frac{\sum_j x_j}{\sum_{i: z_i \in I_{\ell^*}} z_i} \ge \frac{\sum_j x_j}{Q\cdot  R^*} \ge \frac{\sum_j x_j}{Q\cdot  2L^*} \ge \frac{1}{4} \frac{\sum_j x'_j}{Q} = \frac{1}{4} \frac{\mathbb{E}[\sum_j \h x_j]}{Q}. $$
Let $S_1 = \{j \mid x_j \ge L^*/2\}$ and $S_2 = C \setminus S_1$. Note that for any $i \in I_{\ell^*}$, we have $x_j \le z_i$ for $j \in S_2$. For $i \in I_{\ell^*}$, the covering constraint in the LP relaxes to:
$$ \sum_{j \in S_2 \cap A_i} x_j \ge z_i\cdot  (u_i + 1 - |A_i \cap S_1|).$$
Dividing by $z_i/2$ (which is at least $L^*$) and observing that $x'_j \ge 2x_j/z_i$, we have:
$$ \sum_{j \in S_2 \cap A_i} x'_j \ge 2 (u_i + 1 - |A_i \cap S_1|). $$
We now apply Chernoff bounds to obtain:
$$ \Pr\left(\sum_{j\in S_2 \cap A_i} \h x_j< u_i + 1 - |A_i \cap S_1|\,\Big\vert\,  i \in I_{\ell^*}\right)<\left(\frac{e^{-1/2}}{\sqrt{1/2}}\right)^2=2/e.
$$
Therefore, $\mathbb{E}\left[\hat{z}_i\right]\ge (1-2/e)\cdot Q$. This shows an $O(\log n)$ approximation to $\theta_{c}$.

%\begin{proof}[proof of Theorem~\ref{thm:log_approx}]
%Since $\theta_p$ can be computed in polynomial time, we can also approximate %$\theta_c$ within a $\log m$ factor.
%\end{proof} 

%\paragraph{Integrality Gap.} In Section~\ref{sec:integrality}, we show that the integrality gap of our LP formulation is $\Omega(\log m)$.  This shows that our approximation ratio is optimal to within constant factors against the LP bound.

\subsection{Integrality Gap Instance}
 We now give an instance where there is an $\Omega(\min(\log m, \log n))$ integrality gap between $\theta_p$ and $\theta_c$, showing the analysis in \cref{sec:roundproof} is tight.

\begin{theorem}
There exists a committee s.t. $\theta_p =O\left(\frac{1}{\log \min(m, n)}\right)$ and $\theta_c=\Theta(1)$.
\end{theorem}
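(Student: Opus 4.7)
The plan is to construct a specific hierarchical instance that exploits the LP's ability to set $x_j$ values at many different scales simultaneously, a flexibility no integer solution can match. The inspiration comes directly from the rounding in \cref{sec:roundproof}: the $\log m$ loss there arises precisely because small $x_j$ values get scaled up by $1/\alpha$, giving a multiplicative $\ln(1/x_j)$ loss upon integrating over $\alpha$. For tightness we therefore want an instance where the LP optimum genuinely uses $x_j$ values at scales $2^{-1}, 2^{-2}, \ldots, 2^{-T}$ for $T = \Theta(\log \min(m,n))$, and no integer solution can ``collapse'' these scales without losing a $\log m$ factor.

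First, I would design an instance with $T = \Theta(\log \min(m,n))$ levels. At each level $\ell \in \{1, \ldots, T\}$, I would place a group $G_\ell$ of $\Theta(2^\ell)$ candidates together with a collection of voters whose approval sets lie in $G_\ell$ and whose $u_i$ values are tuned so that the covering constraint $\sum_{j \in A_i} y_{ij} \ge z_i(u_i+1)$ becomes tight with $x_j \sim 2^{-\ell}$ on $G_\ell$. The committee $W$ is chosen so that the required $u_i = U_i(W)$ values are realized consistently (for instance, by including ``dummy'' candidates approved by the appropriate voters). Care must be taken so that LP's ``fractional savings'' across levels actually aggregate; naively decomposing into independent per-level subinstances would let the LP (and integer solution) collapse to the best level and give no gap, so the approval structure should gently couple the levels.

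Second, I would exhibit an explicit LP-feasible solution achieving $\theta_p = O(1/T)$. The $x_j$ values at level $\ell$ are set to $\Theta(2^{-\ell})$ so that $\sum_j x_j$ telescopes to $O(T)$, while the $z_i$'s are set uniformly across voters at their natural scale so that $\sum_i z_i$ is a factor of $T$ larger. Verifying feasibility reduces to checking the tight covering inequality on each level along with $y_{ij} \le \min(x_j, z_i)$. After multiplying by $R$, this yields $\theta_p = R \cdot O(1/T) = O(1/\log\min(m,n))$; by choosing $k$ (and hence $R$) appropriately, one can normalize so that the target $\theta_c = \Theta(1)$ lower bound corresponds to ratio $\Omega(k/n)$.

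Third, and most delicately, I would prove $\theta_c = \Omega(1)$ by showing that for every integer solution $(X,Z)$, the density $|Z|/|X|$ is $O(k/n)$. This is where the main obstacle lies: one must rule out every clever integer rearrangement (concentrating on a single level, spreading across several, or building ``clustered'' subsets) from matching the LP. The argument would proceed by case analysis on how $X$ splits across levels. For each level $\ell$, the number of voters at level $\ell$ satisfied integrally is bounded by roughly $|X \cap G_\ell|$ times a level-dependent constant, reflecting the fact that each voter at level $\ell$ requires $u_i+1 = \Theta(2^\ell)$ specific candidates of $G_\ell$; summing over $\ell$ then gives $|Z| = O(|X|)$. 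This combinatorial accounting --- essentially a pigeonhole / LP-duality style argument analogous to classical integrality-gap constructions for set multi-cover --- is the technical crux and must be verified carefully against the coupled approval structure chosen in the first step.
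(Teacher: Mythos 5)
There is a genuine gap here: your proposal is an outline of a construction rather than a construction, and the one piece you leave unspecified---how to ``gently couple the levels''---is precisely the heart of the matter. As you yourself observe, if each level $\ell$ has its own candidate pool $G_\ell$ and its own voters approving only within $G_\ell$, the instance decomposes, and a minimized ratio $\frac{\sum_j x_j}{\sum_i z_i}$ over a disjoint union is always achieved by restricting attention to the single best level; the integer program can restrict to that same level, so no gap accumulates. Your Step 3 then cannot be checked, because the $\Omega(1)$ lower bound on $\theta_c$ depends entirely on the coupling you never specify. A secondary issue is that your plan puts the multi-scale structure in the wrong place: you want $x_j \sim 2^{-\ell}$ varying across levels, whereas what actually drives the $\log$ loss in the rounding of \cref{sec:roundproof} is the ratio $x_j/z_i$, and the paper realizes the gap with all fractional $x_j$ at a \emph{single} scale and the $z_i$ spread across scales.

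Concretely, the paper's construction is: $p$ voter groups $V_1,\dots,V_p$ with $|V_\ell| = 2^\ell$; a single shared pool $C'$ of $2^p$ candidates approved by \emph{every} voter (this is the coupling); and private candidate groups $C_\ell$ of size $2^\ell - 1$ approved only by $V_\ell$, which are placed in $W$ so that a voter in $V_\ell$ needs utility $2^\ell$ to deviate. The LP sets $x_j = 2^{-p}$ on all of $C'$ and $z_i = 2^{-\ell}$ for $i \in V_\ell$, so each group contributes exactly $2^\ell \cdot 2^{-\ell} = 1$ to $\sum_i z_i$ while $\sum_j x_j = 1$, giving $\theta_p \le 1/p$. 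For the integer bound, a deviating committee of $z$ candidates can give utility at most $z$, hence satisfies only voters in $V_1,\dots,V_{\lfloor \log_2 z\rfloor}$, at most $2z-2$ voters in total, so $\theta_c \ge 1/2$. Note how the escalating utility thresholds $u_i + 1 = 2^\ell$, compensated by $z_i = 2^{-\ell}$, do the work you were hoping multi-scale $x_j$'s would do, and the ``everyone approves $C'$'' structure is what prevents collapsing to one level. If you want to salvage your approach, you would need to replace your disjoint $G_\ell$'s with some analogue of this shared pool and then actually carry out the case analysis in your Step 3 against it; as written, the proof is incomplete.
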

\begin{proof}
Suppose $n=2^{p+1}-2$ and there are $p$ groups of voters $V_1, \ldots, V_{p}$. For each $1\le \ell \le p$, $V_\ell$ contains $2^\ell$ voters. There are $p+2$ groups of candidates $C_1, \ldots, C_{p}, C',C^\mathrm{D}$. For each $q\in [p]$, $C_q$ is a group of $2^q-1$ candidates only approved by voter group $V_q$. $C'$ is a group of candidates with $2^{p}$ candidates approved by all voters. $C^\mathrm{D}$ is a group of $p$ ``dummy'' candidates approved by no one, added to ensure that $R = \frac{n}{k} = 1$. Consider the committee $W=C^\mathrm{D}\cup(\bigcup_{q=1}^{p} C_q)$ of size $n$. Note that $m = \Theta(n)$, and further, any voter in group $V_\ell$ has utility $2^\ell-1$ in $W$.

We first compute $\theta_c$. If a deviation has $z$ candidates, it can only make the utilities of voters in $V_1,V_2,\ldots, V_{\lfloor\log_2 z\rfloor}$ strictly increase. The total number of these voters is at most $2z-2$. Thus we have $\theta_c\ge \frac{z}{2z-2} \ge \frac{1}{2}$. 

We now compute the upper bound of $\theta_p$. Set $x_j = 2^{-p}$ for all $j \in C'$. For each $i \in V_{\ell}$, set $z_i = 1/2^{\ell}$. Note that $z_i \ge x_j$ for all $j \in C'$. We have $(U_i(W) + 1) \cdot z_i = 2^{\ell}\cdot 2^{-\ell} = 1$, while $\sum_{j \in C' \cap A_i} x_j = 1$. This is therefore a feasible LP solution, and its value is 
$\theta_p \le \frac{\sum_{j \in C'} x_j}{\sum_{i \in \bigcup_{\ell=1}^{p} V_{\ell}} z_i} = \frac{1}{p}.$
This shows a gap of $\Omega(p) = \Omega(\log m)$. The theorem follows since $n = \Theta(m)$.
%We will show $\theta_p \le 1/p$. Suppose $\theta_p > 1/p$. In approximate Lindahl priceability, suppose the corresponding prices are $p_{ij}$. We will consider the prices on the candidates in the set $C'=\{c'_1,c'_2,\ldots,c'_{2^{p-1}}\}$.
%For each voter $v_i\in V_\ell$, divide $C'$ into $2^{p-1-\ell}$ disjoint sets $\{C^i_1,\ldots, C^i_{2^{p-1-\ell}}\}$, each of size $2^\ell$. Let $\sum_{j\in C^i_v} p_{ij}=K^i_v$. Since $|C^i_v|>2^{\ell}-1=U_i(W)$ and since $\theta_p >1/p$, this implies $K^i_v>1/p$ by the approximate priceability condition. Therefore, we have: 
%\begin{align*}
%\sum_{i=1}^m \sum_{c_j\in C'} p_{ij}&=
%  \sum_{\ell=1}^p \sum_{i\in V_\ell} \sum_{j\in C'} p_{ij} =
%   \sum_{\ell=1}^p \sum_{i\in V_\ell} \sum_{v=1}^{2^{p-1-\ell}} K^i_v \\
%   &>  \sum_{\ell=1}^p \sum_{i\in V_\ell} 2^{p-1-\ell}\cdot 1/p = \sum_{\ell=1}^p 2^\ell\cdot  2^{p-1-\ell}\cdot 1/p =2^{p-1}.
%\end{align*}
%This contradicts the priceability condition that $\sum_{i=1}^m \sum_{c_j\in C'} p_{ij}=|C'|\cdot R=2^{p-1}$. Thus $\theta_p \le 1/p$. Since $p=\Theta(\log m)$, we have $\theta_p =O(1/\log m)$.
\end{proof}

\section{Extension to Arbitrary Utilities and Sizes}
\label{sec:general}
We now extend the result in the previous section to the setting where the candidates have general sizes $s_j$, and voters have arbitrary additive utilities over candidates. We assume voter $i$ has utility $u_{ij} \in \mathbb{Z}^{+} \cup \{0\}$ for candidate $j$. Given a committee $W$ of size at most $k$, the utility of voter $i$ for the committee is $U_i(W) = \sum_{j \in W} u_{ij}$. We restrict the utilities to be integral, so that if $U_i(T) > U_i(W)$, then $U_i(T) \ge U_i(W) + 1$. Let $A_i = \{j \in C \mid u_{ij} > 0\}$.

%As before, given a committee $W$ of total size at most $k$, we wish to compute $\theta_c$ as in \cref{def:approx_core}. Recall that $R = \frac{n}{k}$. 

%A natural modification to the program in \cref{sec:program} for $\theta_c$ yields:
%\begin{gather*}
%    \text{Minimize $R \cdot \frac{\sum_{j\in C} s_j x_j}{\sum_i z_i}$},\ \text{s.t.}\\
%    \forall i \in [n],\ \sum_{j \in C \cap A_i} u_{ij} x_j \ge z_i\cdot (U_i(W) + 1);\\
%    \forall i\in [n],\ \forall j \in [m],\ x_j, z_i \in \{0,1\}.
%\end{gather*}

\paragraph{LP Formulation.} A natural modification to the program in \cref{sec:program} for $\theta_c$ has unbounded integrality gap. %We present the changes needed for an $O(\log m)$ approximation below; the changes needed for an $O(\log n)$ approximation are similar and omitted.
We make two modifications to the linear program. First, in the optimal integer solution, we guess the candidate $j^*$ with largest size. This means we set $x_{j} = 0$ for all $j$ such that $s_j > s_{j^*}$, and delete these items. Since the numerator in the objective is at least $s_{j^*}$, we can set $x_j = 1$ for all $j$ with $s_j < \frac{s_{j^*}}{m}$, and this only increases the numerator by a constant factor. Let $S$ denote the set of these ``small'' items; we ignore these items, and set $U_i(W)$ to be $U_i(W) - U_i(S \cap A_i)$. If the latter quantity is smaller than zero, then we can set $z_i = 1$ and delete this voter from further consideration; this only lowers the objective. We let $m$ denote the number of candidates and $n$ denote the number of voters in the residual instance. We now scale the sizes so that the remaining items have sizes in $\left[\frac{1}{m},1\right]$. Let  $R = \frac{n}{k}$.

Next, we add knapsack cover constraints~\cite{Carr,KY}. Let $\h U_i(S) = \max(0,U_i(W)+1 - U_i(S))$, and let $u_{ijS} = \min(u_{ij}, \h U_i(S))$  The resulting LP is presented below. In this LP, first set of constraints can be interpreted as follows: Even if the $x_j$ for $j \in S$ are all set to $1$, so that voter $i$ already has utility $U_i(S)$, if voter $i$ is chosen by the integer program, the remaining $\{y_{ij}\}$ must push the total utility above $U_i(W)$. Further, any utility value $u_{ij}$ on the LHS can be truncated at $\h U_i(S)$ and the constraint should still hold. This constraint is clearly true for any $S$ in the integer program; the LP encodes the fractional version of all of them. 
\begin{gather*}
    \mbox{Minimize } R\cdot \frac{\sum_{j=1}^m s_j x_j}{\sum_{i=1}^n z_i},\ \text{s.t.}\\
    \forall i\in [n], S \subseteq [m],\ \sum_{j\in A_i \setminus S} u_{ijS} y_{ij}\ge z_i\cdot \h U_i(S);\\
    \forall i\in [n],\ \forall j\in A_i,\ y_{ij}\le \min(x_j, z_i);\\
    \forall i \in [n], j \in [m],\ x_j, z_i, y_{ij} \ge 0.
\end{gather*}

This LP has exponentially many constraints. For any given solution $(x,y,z)$ and fixed voter $i$, we divide the first set of constraints by $z_i$ and use the polynomial-time dynamic programming procedure exactly as in~\cite{Carr} to find the most violated constraint to a $(1+\epsilon)$ approximation, for constant $\epsilon > 0$. Omitting standard details, this implies the LP can be solved to a $(1+\epsilon)$ approximation in polynomial time via the Ellipsoid algorithm. 

\paragraph{Rounding.} 
The rounding is similar to \cref{sec:roundproof}, leading to the following theorem, whose proof is presented in \cref{app:gen_approx}. 

\begin{theorem}
\label{thm:arbitrary}
For the setting with arbitrary additive utilities and sizes, $\theta_c$ can be approximated to an $O(\min(\log m, \log n))$ factor in polynomial time.
\end{theorem}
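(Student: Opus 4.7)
The plan is to run the two rounding schemes from \cref{sec:roundproof} essentially unchanged on the knapsack-cover LP, replacing counting $\sum_j x_j$ with the size-weighted sum $\sum_j s_j x_j$ in the objective analysis, and using the truncated utilities $u_{ijS}$ to keep standard multiplicative Chernoff bounds available. The preprocessing already forces all remaining sizes into $[1/m,1]$, and imposing $x_{j^*}=1$ (consistent with the guess of the largest-size candidate in the integer optimum) guarantees $\sum_j s_j x_j\ge 1$ after scaling; this plays the role that the bound $\sum_j x_j\ge 1/m$ played in the approval case.

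For the $O(\log m)$ bound I would reuse the four-step scheme of \cref{sec:roundproof} verbatim: pick $\alpha\in[0,1]$ uniformly, set $\hat z_i=1$ iff $z_i>\alpha$, define $x_j'=\max\{1/(2m^2),x_j\}$, and independently set $\hat x_j=1$ with probability $\min\{1,2x_j'/\alpha\}$. Because $\sum_j s_j x_j'\le \sum_j s_j x_j + 1/(2m)\le (1+o(1))\sum_j s_j x_j$, the same integration in $\alpha$ as before yields $\mathbb{E}[\sum_j s_j\hat x_j]=O(\log m)\sum_j s_j x_j$. For the covering step, on the event $\{\hat z_i=1\}$ I would let $Q_i=\{j\in A_i:2x_j'\ge\alpha\}$ be the indices whose $\hat x_j$ is deterministically $1$ and invoke the knapsack cover constraint at $S=Q_i$. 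Conditioning on $\alpha$ (which determines $Q_i$) and using $\alpha<z_i$ on this event gives
\[
\mathbb{E}\!\left[\sum_{j\in A_i\setminus Q_i} u_{ij Q_i}\,\hat x_j\,\Big|\,\alpha,\ \hat z_i=1\right]\;\ge\; \frac{2}{\alpha}\sum_{j\in A_i\setminus Q_i} u_{ij Q_i}\,y_{ij}\;\ge\; 2\,\hat U_i(Q_i).
\]
Since each $u_{ij Q_i}$ is capped at $\hat U_i(Q_i)$ by definition, the rescaled summands are independent $[0,1]$-valued random variables with expected sum at least $2$, so a standard multiplicative Chernoff bound delivers a constant failure probability, exactly as in the approval case. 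The $O(\log n)$ bound is analogous: bucket voters by log-scale of $z_i$, keep the heaviest bucket $I_{\ell^*}=(L^*,R^*]$, round $\hat x_j=1$ with probability $\min\{1,2x_j/L^*\}$, and for each voter $i\in I_{\ell^*}$ invoke the knapsack cover constraint at $S=\{j:x_j\ge L^*/2\}$ before applying the same Chernoff bound on the truncated utilities.

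The main obstacle I anticipate is the interaction between rounding and the separation oracle: in both schemes the set $S$ fed to the knapsack cover constraint depends on the LP solution, and in the $O(\log m)$ scheme also on the random $\alpha$. Since the dynamic-programming oracle of Carr et al.\ only guarantees feasibility up to a $(1+\varepsilon)$ factor at the most violated $S$, the covering inequality used inside the Chernoff step holds only up to a $(1+\varepsilon)$ multiplicative slack. Absorbing this slack into the leading constant, together with the $O(1)$ preprocessing loss and the constant Chernoff failure probability, yields $\theta_c=O(\min(\log m,\log n))\cdot\theta_p$, which is \cref{thm:arbitrary}.
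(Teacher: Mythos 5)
Your proposal matches the paper's proof in \cref{app:gen_approx} essentially step for step: both rounding schemes are carried over from \cref{sec:roundproof} with $\sum_j x_j$ replaced by $\sum_j s_j x_j$, the knapsack cover constraint is invoked at $S=Q_i$ (resp.\ $S=\{j: x_j\ge L^*/2\}$) so that the truncated utilities $u_{ijS}\le \hat U_i(S)$ permit a multiplicative Chernoff bound with constant failure probability, and the preprocessing guarantees that the $1/(2m^2)$ boost costs only a constant factor. Your observation about absorbing the $(1+\varepsilon)$ slack from the approximate separation oracle is a valid and correctly handled refinement, so the argument is sound and essentially identical to the paper's.
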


\section{Auditing Lindahl Priceability}
\label{sec:comp}
In this section, we study fairness of a committee in terms of closeness to market clearing.  The concept is motivated by \emph{Lindahl equilibrium}~\cite{lindahl1958just,foley1970lindahl}, a market clearing concept for public goods. Such market clearing notions have been widely studied as fairness concepts in Economics~\cite{Fisher,varian}. Our main result is the following novel connection to the core --  auditing the approximation of a committee to Lindahl priceability reduces to the LP relaxation for auditing for core stability, hence leading to a polynomial time auditing algorithm. 

We consider the {\sc Approval Election} setting below. The extension to arbitrary utilities and sizes is presented in \cref{sec:lindahl_fractional}. % as far as we can tell, this computational result is novel. 

\subsection{Lindahl Priceability}
As in the definition of core stability, we first scale the entitlements so that the entitlement of each voter is set to $1$ instead of $k/n$. Each candidate now requires $R = n/k$ entitlement to be paid for. A feasible committee of size $k$ corresponds to a total entitlement of $n$ in this scaling.

%We define Lindahl priceability a bit differently from~\cite{Peters21b}, but these definitions are equivalent. 
A committee $W$ of size at most $k$ is Lindahl priceable if there exists a price system $\{p_{ij}\}$ from voters to candidates, such that the following hold:
\begin{itemize}
    \item [1.] $\forall j\in [m],\ \sum_i p_{ij} \le R$, and
%    \item [2.] $\forall i\in [n],\ \sum_{j \in W} p_{ij} \le 1$.
    \item [2.] $\forall i\in [n],\ T\subseteq C$, if $|T\cap A_i|\ge |W\cap A_i|+1$, then $\sum_{j \in T} p_{ij} > 1.$
\end{itemize}

The first condition above means that for each candidate, the prices from all voters sum up to at most $R = n/k$, so that each candidate is not ``over-paid''. Note that the first set of constraints can be made equalities by raising the prices $\{p_{ij}\}$, so the candidates are exactly paid for. The second condition means a voter cannot afford any committee that she strictly prefers to $W$. %The second condition simply means each voter $i$ has an entitlement (or budget) of one, and can by herself choose any committee $W'$ whose total price facing her, $\sum_{j \in W'} p_{ij}$, is at most one.  The third condition means that $W$ is an optimal committee that each voter chooses given her budget and the prices; any committee $T$ with larger utility costs more than her budget. These prices therefore correspond to a market clearing solution. %An easy proof shows that if a solution is Lindahl priceable, it lies in the core~\cite{DBLP:conf/wine/FainGM16,Peters21b}.

Lindahl priceability can be viewed as an integral version of the {\em gradient optimality conditions} in the fractional Lindahl equilibrium~\cite{foley1970lindahl}. As mentioned before, this makes our definition subtly different from a related concept in~\cite{Peters21b}. Analogous to the fractional Lindahl equilibrium, the following proposition holds, and we present a proof later in this section.  
%A key result in~\cite{foley1970lindahl} is that when the committee $W$ is allowed to be fractional, such a price system always exists and lies in the core. This is shown via a fixed point argument. %The analogous question for {\sc Approval Elections} (with integrally chosen candidates) is open, as is the existence of the core for this setting. 
%The same proof as in~\cite{foley1970lindahl} also shows the following in our setting, and we present an alternate proof later in this section. %The following proposition shows the relation between strong priceability and core. The proof idea is the same as \cite{foley1970lindahl}.

\begin{proposition}
\label{prop1}
If a committee is Lindahl priceable, it lies in the core.% (that is, satisfies $\theta_c > 1$).
\end{proposition}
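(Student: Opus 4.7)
The plan is to proceed by contradiction, treating the prices $\{p_{ij}\}$ as a dual certificate that rules out any coalitional deviation. Assume $W$ is Lindahl priceable with such prices, but suppose for contradiction $W$ is not in the core. Then there exist a coalition $S \subseteq [n]$ and a deviating committee $T$ with $|T| \le |S|\cdot \frac{k}{n} = \frac{|S|}{R}$ such that $U_i(T) > U_i(W)$ for every $i \in S$. Since utilities are approval-based and integer-valued, this means $|T \cap A_i| \ge |W \cap A_i|+1$ for each $i \in S$, which is exactly the hypothesis of Lindahl priceability condition (2) applied to the same committee $T$ for every voter in $S$.

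The key step is a double-counting argument on the quantity $\sum_{i\in S}\sum_{j \in T} p_{ij}$. Applying condition (2) voter-by-voter, the inner sum strictly exceeds $1$ for each $i \in S$, so the whole double sum strictly exceeds $|S|$. Swapping the order of summation and using condition (1), which gives $\sum_{i \in S} p_{ij} \le \sum_i p_{ij} \le R$ for every candidate $j$, I can bound the same double sum from above by $R\cdot |T|$. Chaining these inequalities yields $R\cdot |T| > |S|$, i.e., $|T| > \frac{|S|}{R}$, which contradicts the assumed budget $|T| \le \frac{|S|}{R}$ for the coalition.

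I do not expect a genuine obstacle; the argument is essentially a first-welfare-theorem-style exchange of summation orders, where each deviating voter "overpays" their unit budget and the market collects at most $R$ per selected candidate. The only subtlety worth flagging in the writeup is that condition (2) must be invoked with the same $T$ for every $i\in S$ — this is legitimate because the deviating committee in the core definition is a single common committee, and $T$ strictly improves each voter of $S$ by at least one unit of approval, which is precisely what activates condition (2).
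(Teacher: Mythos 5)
Your proof is correct, but it takes a different route from the one the paper uses for this proposition. You give a direct double-counting argument: each deviating voter in $S$ must pay strictly more than her unit entitlement for $T$ by condition (2), while swapping the order of summation and applying condition (1) caps the total collected at $R\cdot|T|$, forcing $|T| > |S|/R$ and contradicting the coalition's budget. (The one implicit ingredient is $p_{ij}\ge 0$, needed to drop the voters outside $S$ in $\sum_{i\in S}p_{ij}\le\sum_i p_{ij}$; the paper's definition does not state this explicitly, though its linear programs do impose it.) The paper instead derives \cref{prop1} as a corollary of \cref{thm:lindahl}: via LP duality it shows the Lindahl priceability ratio $\theta_\ell$ equals $\theta_p$, the value of the LP relaxation of the core-auditing program, and then concludes from $\theta_\ell=\theta_p\le\theta_c$ that $\theta_\ell>1$ implies $\theta_c>1$. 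Your argument is more elementary and self-contained, immediately generalizes to show that $\theta$-approximate Lindahl priceability implies membership in the $\theta$-approximate core, and is in fact the same summation-exchange the paper itself uses to prove \cref{prop:subcore} for the sub-core. What the paper's duality route buys beyond the implication is the exact identification $\theta_\ell=\theta_p$, which is what makes the Lindahl priceability ratio computable in polynomial time and ties it to the core-auditing LP; your argument alone does not yield that quantitative connection.
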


\subsection{Auditing via Duality}
As with core stability, we now define the best approximation to Lindahl priceability achievable by a committee $W$. Formally, we only allow a voter to use $\theta_p < 1$ endowment if they want to deviate to a committee with larger utility.

\begin{definition}[$\theta$-Approximate Lindahl Priceability]
\label{def:approxlindahl}
A committee $W$ of size at most $k$ is \emph{$\theta$-approximate Lindahl priceable} if there exists a price system $\{p_{ij}\}$ from voters to candidates, such that the following conditions hold:
\begin{itemize}
    \item [1.] $\forall j\in [m],\ \sum_i p_{ij} \le R$, and
 %   \item [2.] $\forall i\in [n],\ \sum_{j \in W} p_{ij} \le 1$.
    \item [2.] $\forall i\in [n],\ T\subseteq C$, if $|T\cap A_i|\ge |W\cap A_i|+1$, then $\sum_{j \in T} p_{ij} > \theta.$
\end{itemize}
\end{definition}

 The Lindahl priceability ratio of a committee $W$ is the smallest $\theta$ for which the committee is not $\theta$-approximate Lindahl priceable.  Our main result is the following theorem that ties Lindahl priceability ratio to the fractional relaxation of $\theta_c$. As a corollary, this shows that determining if a committee $W$ is Lindahl priceable is polynomial time solvable.  %hence showing it can be efficiently computed. 

\begin{theorem}
\label{thm:lindahl}
For a committee $W$, its Lindahl priceability ratio is  $\theta_p$ from \cref{eq:thetap}.
\end{theorem}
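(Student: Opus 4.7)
I plan to prove $\theta^{*}=\theta_p$ by dualizing twice: once on the LP relaxation for $\theta_p$ from \cref{sec:program}, and once on the ``cheapest deviating committee'' LP hidden inside \cref{def:approxlindahl}, then observing that the two duals coincide term-for-term.

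First I would normalize the LP so that $\sum_i z_i=1$ and the objective becomes $R\sum_j x_j$. Assigning dual variables $\lambda,\alpha_i,\beta_{ij},\gamma_{ij}$ to the primal constraints $\sum_i z_i\ge 1$, $\sum_{j\in A_i} y_{ij}\ge (u_i+1) z_i$, $y_{ij}\le x_j$, and $y_{ij}\le z_i$ respectively, standard LP duality yields the dual problem: maximize $\lambda$ subject to $\sum_{i:j\in A_i}\beta_{ij}\le R$ for every $j$, $(u_i+1)\alpha_i\ge\lambda+\sum_{j\in A_i}\gamma_{ij}$ for every $i$, $\beta_{ij}+\gamma_{ij}\ge\alpha_i$ for every $i$ and $j\in A_i$, and all variables non-negative. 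By strong LP duality its optimum equals $\theta_p$.

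The key step is the substitution $p_{ij}=\beta_{ij}$, extended by $p_{ij}=0$ for $j\notin A_i$. The first dual constraint then becomes the Lindahl budget $\sum_i p_{ij}\le R$. For any voter $i$ and any $T\subseteq C$ with $|T\cap A_i|\ge u_i+1$, chaining $\beta_{ij}\ge\alpha_i-\gamma_{ij}$, $\gamma_{ij}\ge 0$, $|T\cap A_i|\ge u_i+1$, and the third dual constraint yields
$$\sum_{j\in T} p_{ij}\;\ge\;\sum_{j\in T\cap A_i}(\alpha_i-\gamma_{ij})\;\ge\;(u_i+1)\alpha_i-\sum_{j\in A_i}\gamma_{ij}\;\ge\;\lambda.$$
Hence the committee is $\theta$-approximate Lindahl priceable for every $\theta<\lambda$, so its priceability ratio is at least $\theta_p$.

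For the converse, given prices $\{p_{ij}\}$ certifying $\theta$-approximate Lindahl priceability, I set $\beta_{ij}=p_{ij}$ and build the remaining dual variables by applying LP duality to the inner problem $\min_{T:\,|T\cap A_i|\ge u_i+1}\sum_{j\in T}p_{ij}$, whose LP relaxation simply picks the $u_i+1$ cheapest $p_{ij}$ in $A_i$ and has value at least $\theta$. Its dual produces $\alpha_i\ge 0$ and $\mu_{ij}\ge\max(0,\alpha_i-p_{ij})$ with $(u_i+1)\alpha_i-\sum_{j\in A_i}\mu_{ij}\ge\theta$; setting $\gamma_{ij}:=\mu_{ij}$ makes the full $\theta_p$-dual feasible with value $\theta$, giving $\theta_p\ge\theta$, and taking the supremum over achievable $\theta$ finishes the equality. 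The main obstacle will be the first direction's inequality chain: one must drop the $A_i\setminus T$ contributions from $\sum_{j\in A_i}\gamma_{ij}$ and replace $|T\cap A_i|$ with $u_i+1$ without reversing any signs, which is exactly where $\gamma_{ij}\ge 0$ and the sign of $\alpha_i-\gamma_{ij}$ versus $p_{ij}$ come into play. As a pleasant byproduct, \cref{prop1} drops out: since $\theta_p\le\theta_c$, any Lindahl priceable committee $W$ has $\theta_c\ge\theta_p\ge 1$ and therefore lies in the core.
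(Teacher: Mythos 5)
Your proposal is correct and follows essentially the same route as the paper: both proofs hinge on dualizing the per-voter ``cheapest deviating committee'' LP and identifying the assembled maximize-$\theta$ program as the LP dual of the relaxation defining $\theta_p$. The only difference is organizational -- you dualize the $\theta_p$ LP once and verify the price interpretation directly via the inequality chain $\sum_{j\in T}p_{ij}\ge |T\cap A_i|\,\alpha_i-\sum_{j\in T\cap A_i}\gamma_{ij}\ge\lambda$, whereas the paper builds the combined program from the per-voter duals and then dualizes it to recover the $\theta_p$ LP -- but the underlying argument is the same.
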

\begin{proof}
For simplicity, let $u_i = U_i(W)$. Let the Lindahl priceability ratio of the instance be $\theta_{\ell}$.  Fix the prices $\{p_{ij}\}$ achieving this. Then the minimum entitlement needed for a voter $i$ to deviate to a committee of utility larger than $u_i$ is captured by the following linear program:
\begin{gather*}
    \text{Minimize $ \sum_{j\in A_i} p_{ij} \gamma_{ij}$},\ \text{s.t.}\\
   \sum_{j\in A_i} \gamma_{ij}\ge u_i+1;\\
    \forall j\in A_i,\ \gamma_{ij}\le 1; \\
    \forall j\in A_i,\ \gamma_{ij}\ge 0.
\end{gather*}
Here, the variable $\gamma_{ij}$ corresponds to the fraction to which this voter chooses candidate $j$. In the optimal solution, these variables will be integers. Since the Lindahl priceability ratio is $\theta_{\ell}$, Condition (2) of \cref{def:approxlindahl} implies objective of the above LP is at least $\theta_{\ell}$ for any $i\in [n]$. 

Now take the dual of the above, where the dual variable for the first constraint is $\lambda_i$ and the dual variable for the second constraint is $\alpha_{ij}$. We obtain:
\begin{gather*}
\text{Maximize $\theta_i$},\ \text{s.t.}\\
    \forall j\in A_i, \lambda_i -  \alpha_{ij}\le p_{ij};\\
    (u_i+1)\lambda_i-\sum_{j\in A_i}\alpha_{ij} \ge \theta_i; \\
    \forall j \in [m],\  \lambda_i, \alpha_{ij} \ge 0.
\end{gather*}
Since the optimal $\theta_i \ge \theta_{\ell}$, this solution satisfies $(u_i+1)\lambda_i-\sum_{j\in A_i}\alpha_{ij} \ge \theta_{\ell}$. Since $\{p_{ij}\}$ satisfy Condition (1) in \cref{def:approxlindahl}, $\{p_{ij}\}, \{\alpha_{ij}\}, \{\lambda_i\}$ and $\theta = \theta_{\ell}$ are feasible for the following program:
\begin{gather*}
    \text{Maximize $\theta$},\ \text{s.t.}\\
    \forall i\in [n], j\in A_i, \lambda_i-  \alpha_{ij}\le p_{ij};\\
    \forall i\in [n],\ (u_i+1)\lambda_i-\sum_{j\in A_i}\alpha_{ij}\ge \theta;\\
    \forall j\in [m],\ \sum_{i\in T_j} p_{ij}\le R; \\
    \forall i \in [n], j \in [m], \ \lambda_i, \alpha_{ij}, p_{ij} \ge 0.
\end{gather*}
We now claim that the optimal solution to the above program must be exactly $\theta_{\ell}$.  If it is larger, this larger value $\theta'$ must be feasible for the per-voter duals, which means the per-voter primals have value at least $\theta'$. Then the Lindahl priceablility is at least $\theta'$, contradicting the definition of $\theta_{\ell}$. %Furthermore, w.l.o.g., in the optimal solution, the constraints $\sum_{i\in T_j} p_{ij}\le R$ are tight as we can always raise the $p_{ij}$ values otherwise while preserving feasibility and the objective.

Finally, take the dual for the LP above, let $y_{ij},z_i,x_j$ respectively be the dual variable of the three constraints. The dual is the following:
\begin{gather*}
    \text{Minimize $R\cdot \sum_{j=1}^m{x_j}$},\ \text{s.t.}\\
    \forall i\in [n],\ \sum_{j\in A_i} y_{ij}\ge z_i\cdot (u_i+1);\\
    \forall i\in [n],\ \forall j\in A_i,\ y_{ij}\le x_j;\\
    \forall i\in [n],\ \forall j\in A_i,\ y_{ij}\le z_i;\\
     \sum_i z_i\ge 1; \\
    \forall i \in [n], j \in [m], \ z_i, x_j, y_{ij} \ge 0. 
\end{gather*}
This optimal value (which is $\theta_{\ell}$) is also the definition of $\theta_p$, completing the proof.
\end{proof}

Note that if $\theta_{\ell} > 1$, then since $\theta_c \ge \theta_p = \theta_{\ell} > 1$, we have $\theta_c > 1$. Therefore, if a committee is Lindahl priceable, it lies in the core, showing \cref{prop1}.

\section{ Sub-core for {\sc Approval Elections}}
\label{sec:subcore}
Given our approximation results for auditing the core, we can ask if such results can also be derived for weaker fairness notions. Such an auditing notion would be interesting to a practitioner in addition to auditing the core, since it is quite likely an implemented rule and resulting committee would be closer to satisfying a weaker but still reasonable notion of fairness compared to the core. We present a new weakening of the core for {\sc Approval elections}, that we term the sub-core, that we show also admits approximate auditing.  Note that this result is not implied by our results for the core; indeed, there are weakenings of the core, such as EJR, that we do not know how to efficiently audit.

\subsection{Weak Priceability}
In the multiwinner election setting, suppose the final condition in Lindahl priceability is relaxed so that each voter is only allowed to add candidates to its deviating committee, we get the following relaxed version of priceability. Recall that $R = n/k$, where $n$ is the total number of voters.

\begin{definition}[Weak Priceability]\label{def:subcore}
A committee $W$ of size at most $k$ is \emph{weakly priceable} if there exists a set of prices $\{p_{ij}\}$ from each voter $v_i$ to each candidate $c_j$, such that the following two conditions hold:
\begin{itemize}
    \item [1.] $\forall j\in [m],\ \sum_i p_{ij} \le R$.
%    \item [2.] $\forall i\in [n],\ \sum_{j \in W} p_{ij} \le 1$.
    \item [2.] $\forall i\in [n]$, $d\in A_i\setminus W, p_{id}+\sum_{j \in A_i\cap W} p_{ij} > 1.$
\end{itemize}
\end{definition}
This notion is a relaxation of ``priceability'' as proposed in~\cite{peters2020proportionality} when $|W| = k$. The proof of the implication can be found in~\cite{berker2025edge}. 
% We show in \cref{app:weak} that the priceability definition in~\cite{peters2020proportionality} with price $R$ is equivalent to the following set of four constraints:
% \begin{enumerate}
%     \item [(1')] For all $j \in [m]$, $\sum_i p_{ij} = R$;
%     \item [(2')] For all $i \in [n]$, $\sum_{j \in W} p_{ij} \le 1$;
%     \item [(3')] For all $i \in [n]$ and $d \in A_i \setminus W$, $\sum_{j \in W} p_{ij} + p_{id} > 1$;
%     \item [(4')] For all $i \in [n]$, $j \in [m]$, $p_{ij} > 0$ only if $j \in A_i$.
% \end{enumerate}
% It is straightforward to see that these conditions imply Weak Priceability: Condition (1') is stricter than Condition (1) in Weak Priceability and Conditions (2') and (4') are extra. 
Unlike Lindahl priceability, there are many natural and greedy voting rules, such as the Phragm\'en rule~\cite{phragmen}, that satisfy weak priceability, making it a desirable property to study in practice.
%This notion is equivalent to ``priceability'' as proposed in~\cite{PetersS20}. Unlike Lindahl priceability, there are many natural and greedy voting rules, such as the Phragm\'en rule~\cite{phragmen}, that satisfy weak priceability, making it a desirable property to study in practice.

\subsection{Sub-core}
If we proceed as in the proof of \cref{thm:lindahl} and take the dual of the weak priceability ratio, we obtain a new concept of fairness that we call the {\em sub-core}. %This notion is strongly related to weak priceability we discussed in section~\ref{sec:comp}. 

\begin{definition}[Sub-core]
A committee $W$ lies in the \emph{sub-core} if there is no $S\subseteq V$ and committee $T$ with $|T|\le \frac{|S|}{n}\cdot k$, s.t. $A_i\cap W \subsetneq A_i\cap T$ for all $i\in S$.
\end{definition}
The sub-core prevents any group of voters from deviating to a new committee in which each voter's approved candidates  forms a proper superset of the approved candidates in the original committee. 

Clearly, any committee that lies in the core also lies in the sub-core. The following proposition shows the sub-core is a weakening of weak priceability. % and is proved in  \cref{app:subcore_prop}.

\begin{proposition}\label{prop:subcore} 
If a committee is weakly priceable, then it lies in the sub-core.
\end{proposition}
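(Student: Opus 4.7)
The plan is to argue by contradiction via a standard double counting of prices. Suppose $W$ is weakly priceable with prices $\{p_{ij}\}$ (I will assume these are nonnegative, which is the standard convention and is implicit in the definition), yet $W$ fails to lie in the sub-core. Then there exist $S \subseteq [n]$ and a committee $T$ with $|T| \le \frac{|S|}{n}\cdot k = |S|/R$ such that $A_i \cap W \subsetneq A_i \cap T$ for every $i \in S$. My goal is to derive $|T| > |S|/R$, contradicting the assumption on $|T|$.

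The first step is to produce, for each $i \in S$, a witness candidate $d_i \in (A_i \cap T) \setminus W$, which exists because the inclusion $A_i \cap W \subsetneq A_i \cap T$ is proper. Applying condition (2) of weak priceability (\cref{def:subcore}) to this $d_i$, we get
\[
p_{i,d_i} + \sum_{j \in A_i \cap W} p_{ij} > 1.
\]
Since $A_i \cap W \subseteq A_i \cap T$ and $d_i \in A_i \cap T$, the left-hand side is bounded above by $\sum_{j \in A_i \cap T} p_{ij}$, which in turn is bounded above by $\sum_{j \in T} p_{ij}$ (using nonnegativity). Hence for every $i \in S$ we obtain $\sum_{j \in T} p_{ij} > 1$.

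The second step is to sum this inequality over $i \in S$, then swap the order of summation and apply condition (1) of weak priceability:
\[
|S| < \sum_{i \in S} \sum_{j \in T} p_{ij} = \sum_{j \in T} \sum_{i \in S} p_{ij} \le \sum_{j \in T} \sum_{i=1}^n p_{ij} \le \sum_{j \in T} R = |T|\cdot R.
\]
Dividing by $R$ yields $|T| > |S|/R = \frac{|S|}{n}\cdot k$, which contradicts $|T| \le \frac{|S|}{n}\cdot k$ and completes the proof.

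There is no real obstacle here; the argument is a direct analog of the classical pricing proofs that priceability implies proportionality guarantees (the same schema underlies \cref{prop1} via \cref{thm:lindahl}). The only subtle point is making sure the witness $d_i$ comes from the proper-superset condition in the sub-core definition, which is precisely the strengthening that lets the single-candidate augmentation clause in weak priceability bite; if one instead asked only $A_i \cap W \subseteq A_i \cap T$, no such $d_i$ would exist and the argument would break.
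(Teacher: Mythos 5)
Your proof is correct and is essentially the same double-counting argument as the paper's: both extract a witness candidate in $(A_i \cap T)\setminus W$ from the proper inclusion, invoke the single-candidate augmentation condition to get $\sum_{j\in T} p_{ij} > 1$ for each $i \in S$, and then sum over $S$ and swap the order of summation against the per-candidate price bound. The paper merely writes the witness step implicitly inside its chain of inequalities, while you make $d_i$ explicit.
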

\begin{proof}
For the purpose of contradiction, let $W$ be a committee that is weakly priceable but not in the sub-core. Suppose that voter set $S$ deviates to a committee $T$ such that $\forall i \in S, A_i\cap W\subsetneq A_i\cap T$. We consider all the prices for the candidates in $S$ and have the following inequalities, where the first inequality follows from the first constraint in \cref{def:subcore}:
\begin{align*}
    |T|\cdot R & \ge \sum_{j\in T} \sum_{i=1}^n p_{ij}= \sum_{i=1}^n \sum_{j\in T} p_{ij}\ge \sum_{i\in S}\left( \sum_{j\in T} p_{ij}\right)\\
    &\ge \sum_{i\in S} \left(\sum_{j\in A_i\cap W} p_{ij}+\sum_{j\in T\setminus W} p_{ij}\right)\\
    &>  \sum_{i\in S} 1 \tag*{(By the second constraint in \cref{def:subcore})}\\
    &=|S|.
\end{align*} 
This contradicts the fact that $|S|\ge |T|\cdot \frac{n}{k},$ completing the proof.  
\end{proof}

Since weakly priceable committees can be easily found by greedy procedures~\cite{PetersS20}, this shows that the sub-core is always non-empty.

\paragraph{Hardness of Auditing Sub-core.}
%This also implies that any committee in the core also lies in the sub-core. Moreover, if each candidate only approves no more than 2 candidates, the core is equivalent to sub-core. 
Though the sub-core seems like a weak and satisfiable fairness condition (it insists voters have no greedy deviation to a better committee), we show that deciding if a given committee $W$ lies in the sub-core is actually {\sc NP-Hard}.  %This also implies that auditing the core is also hard.
Towards this end, we observe that the core and sub-core coincide when each voter approves at most $2$ candidates (i.e., for all voters $i$, we have $|A_i| \le 2$). To see this, suppose the original committee was $W$ and a subset of voters deviate to $T$. If a deviating voter had original utility zero, then $A_i \cap W = \varnothing$, so that $A_i \cap T \supsetneq A_i \cap W$. Similarly, if $|A_i \cap W| = 1$ and $|A_i \cap T| = 2$, then $A_i \cap T = A_i \supsetneq A_i \cap W$. This shows any deviation satisfies the sub-core property, so that the core coincides with the sub-core. 

\begin{theorem}%[Proved in \cref{app:hard3}]
If each voter only approves at most two candidates, deciding whether a committee $W$ does not lie in the sub-core (or core) is {\sc NP-Complete}. \label{thm:2_hard}
\end{theorem}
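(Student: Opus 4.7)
My plan is to establish NP membership and then reduce a suitably restricted NP-hard graph problem to the 2-approval audit question.

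\textbf{NP membership.} A certificate is a pair $(S,T)$ with $S\subseteq[n]$ and $T\subseteq C$. In polynomial time I verify $|T|\le|S|\cdot k/n$ and that $A_i\cap W\subsetneq A_i\cap T$ (equivalently, $U_i(T)>U_i(W)$ in the core version) for every $i\in S$. Both the witness and the verification are polynomial, so the problem lies in NP.

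\textbf{Graph reformulation.} Since $|A_i|\le 2$, identify voters with (multi-)edges and candidates with vertices. Voter-edges split by how many endpoints lie in $W$: both-in-$W$ edges are never deviable; ``one-in-$W$'' edges (type $E_1$) are deviable iff both endpoints enter $T$; ``none-in-$W$'' edges (type $E_0$) are deviable iff $T$ meets them. The audit question becomes: is there $T\subseteq C$ with $|S(T)|/|T|\ge R=n/k$, where $S(T)$ is the set of deviable edges? A pure $E_1$ instance is exactly the classical densest subgraph problem (polynomial), and a pure $E_0$ instance is optimized by a single max-degree vertex (polynomial). Hardness must therefore arise from the interplay between the two edge types.

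\textbf{Reduction from INDEPENDENT SET on $d$-regular graphs.} Given an instance $(H,q)$ with $H$ $d$-regular and $q$ large relative to $d$ (still NP-hard for $d\ge 3$), introduce a pivot candidate $p\in W$ together with a tunable number of dummy candidates in $W$. For each $v\in V_H$ add a candidate $v\notin W$ and $M_1$ parallel type-$E_1$ voters approving $\{v,p\}$, and for each edge $\{u,v\}\in E_H$ add $M_0$ parallel type-$E_0$ voters approving $\{u,v\}$. For a deviation $T=\{p\}\cup S$ with $S\subseteq V_H$, $d$-regularity gives $|N_{E_H}(S)|=d|S|-|E_H(S)|$, hence
\[\frac{|S(T)|}{|T|}=\frac{(M_1+M_0 d)\,|S|-M_0\,|E_H(S)|}{|S|+1},\]
which, restricted to $|S|=q$, is maximized precisely when $|E_H(S)|=0$, i.e., when $S$ is an IS of size $q$. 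Setting $R$ to this maximum value and padding $W$ with dummies to realize $R=n/k$ yields the desired equivalence.

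\textbf{Main obstacle.} The technical work is ruling out spurious $T$'s. The cases are: (i) $T$ not containing $p$, where all type-$E_1$ voters are lost and the ratio is at most $M_0 d$, below $R$ once $M_1/M_0>d/q$; (ii) $T=\{p\}\cup S$ with $|S|<q$, where the ratio is monotone in $|S|$ and hence strictly below the $|S|=q$ value; (iii) $T=\{p\}\cup S$ with $|S|\ge q$ and $|E_H(S)|\ge 1$, which loses at least $M_0$ from the numerator. The hardest sub-case is $|S|>q$, where I must show that the combinatorial lower bound on $|E_H(S)|$ forced by ``$H$ has no IS of size larger than $q$'' outweighs the growth $(M_1+M_0 d)(|S|-q)/(q+1)$ obtained by enlarging $S$. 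I anticipate that choosing $M_1/M_0$ sufficiently large, together with restricting the reduction to an IS-hard family of regular graphs with sufficient expansion, makes every one of these inequalities strict. This parameter calibration, and the discrete case-analysis on the shape of $T$, are the main technical obstacles; once they are verified the NP-hardness, and hence the NP-completeness, follow.
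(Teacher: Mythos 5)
Your construction is, up to cosmetic differences, exactly the paper's: the paper reduces from Maximum Independent Set on $3$-regular graphs by adjoining an apex vertex $s$ adjacent to every vertex of $H$, taking vertices as candidates, edges as $2$-approval voters, and $W=\{s\}$; your ``pivot candidate $p$'' with $M_1$ parallel $\{v,p\}$ voters and $M_0$ parallel $\{u,v\}$ voters is this construction with $M_1=M_0=1$, and your target ratio $R=\frac{(M_1+M_0d)q}{q+1}$ specializes to the paper's $n/k=\frac{q(D+1)}{q+1}$. Your NP-membership argument, your dispatch of case (i) (no pivot in $T$, needing only $q>d$) and case (ii) ($|S|<q$, by monotonicity of $\frac{(M_1+M_0d)|S|}{|S|+1}$) are correct and match what the paper does.

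The genuine gap is case (iii) with $|S|>q$, which you explicitly leave open and propose to handle by ``restricting to an IS-hard family of regular graphs with sufficient expansion.'' That escape hatch is problematic: NP-hardness of Independent Set on $3$-regular \emph{expanders} is not a standard citable result, so as written the reduction is not grounded in a known hard problem. Moreover, expansion is not needed. The paper closes this case by an iterative pruning argument: whenever two candidates of $T\setminus\{s\}$ share an edge, delete one of them from $T$ and the (at most $D$) voters it uniquely serves from $S$; since $D\cdot\frac{q+1}{q(D+1)}<1$, the deviation inequality $|T|\le|S|\cdot\frac{q+1}{q(D+1)}$ is preserved, and the process terminates with $T_f\setminus\{s\}$ an independent set satisfying $|S_f|\le(|T_f|-1)(D+1)$, which forces $|T_f|\ge q+1$ and hence an independent set of size $q$. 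Alternatively, your counting route can be completed without expansion via a Tur\'an-type bound: if $H$ has no independent set of size $q$, then any $S$ with $|S|=t$ induces at least $\frac{t^2}{2(q-1)}-\frac{t}{2}$ internal edges, and the resulting loss $M_0\cdot\frac{(q+1)t(t-q+1)}{2(q-1)}$ exceeds the linear gain $(M_1+M_0d)(t-q)$ once $q\ge 2(d+1)$, which holds since $3$-regular graphs always have independent sets of size $\ge|V_H|/4$. Either of these closes the argument; until one of them is carried out, the reduction's correctness in the NO case is unproven.
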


To prove \cref{thm:2_hard}, we reduce from the Maximal Independent Set problem for 3-regular graphs. 

\begin{lemma} [3-Regular Maximal Independent Set~\cite{fleischner2010maximum}]
Given a 3-regular graph $G(U,E)$, determine whether the Maximum Independent Set contains at least $q$ vertices is {\sc NP-hard}. 
\end{lemma}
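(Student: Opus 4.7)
The plan is to reduce from Maximum Independent Set on graphs of maximum degree at most $3$, which is itself NP-hard via a standard reduction from 3-SAT using clause/variable gadgets of bounded degree. Given an instance $(G', q)$ with $\Delta(G') \le 3$, I will construct in polynomial time a 3-regular graph $G(U,E)$ and an integer offset $c$ such that $G'$ has an independent set of size at least $q$ if and only if $G$ has one of size at least $q + c$.

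The construction uses a local degree-padding gadget. For each vertex $v \in V(G')$ of degree $d_v < 3$, I would attach $3 - d_v$ fresh copies of a small fixed subgraph $H$ (a ``cubic filler''). Each copy of $H$ carries a distinguished \emph{port} vertex, which I identify with $v$, contributing exactly one new incidence at $v$; every non-port vertex of $H$ is arranged to have degree exactly $3$ inside $H$. The critical property I want to engineer into $H$ is invariance: the maximum size of an independent set in $H$ should be the same constant $c_H$ whether or not the port vertex is included, i.e., the ``port-in'' and ``port-out'' optima of $H$ must agree.

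Setting $c = c_H \cdot \sum_{v} (3 - d_v)$, the invariance immediately gives the required equivalence. An independent set $I' \subseteq V(G')$ lifts to $I'$ together with a port-respecting maximum IS of each attached copy of $H$, producing an independent set of size $|I'| + c$ in $G$. Conversely, any independent set of $G$ projects onto $V(G')$ to an independent set of $G'$, and by invariance each gadget contributes at most $c_H$ regardless of what the projection does at the port, so the projected IS has size at least $|I| - c$. Because $G$ is 3-regular by construction and the whole transformation is polynomial-time with $|U| = O(|V(G')|)$, this reduction establishes NP-hardness of deciding whether a 3-regular graph has an independent set of size at least the given threshold.

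The main obstacle is exhibiting an explicit $H$ with all three properties simultaneously: 3-regular away from the port, single-edge attachment at the port, and IS-invariance under the port's status. A finite case analysis on a small candidate suffices; a workable choice is a gadget built from two triangles sharing an edge (or a short ``prism'' piece), suitably trimmed so that the port has exactly one neighbor inside $H$ and the matching pair of maximum independent sets (one containing the port, one not) can be checked by inspection to have the same cardinality. Once such an $H$ is fixed, regularity of $G$ and the offset accounting are immediate, and the reduction is complete.
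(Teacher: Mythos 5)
The paper does not actually prove this lemma; it imports it from~\cite{fleischner2010maximum}, so your argument has to stand on its own. Your overall plan --- padding a maximum-degree-$3$ independent set instance up to a $3$-regular one with local gadgets --- is the standard and workable route, but the invariance condition you impose on the gadget is the wrong one, and the accounting built on it fails. Since you identify the port $p$ with $v$, write $a$ for the largest independent set of $H$ containing $p$ and $b$ for the largest one avoiding $p$. A copy of $H$ attached at $v$ contributes at most $a-1$ \emph{new} (non-port) vertices when $v$ is selected and at most $b$ when it is not; your lift ``of size $|I'|+c$'' double-counts the port whenever $v\in I'$. Invariance of the contribution therefore requires $b=a-1$ (equivalently, every maximum independent set of $H$ contains the port), not the equality $a=b$ that you ask for. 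With $a=b=c_H$ the construction is biased toward discarding every degree-deficient vertex of $G'$: if $G'$ is a single edge $uv$, each endpoint receives two gadgets, and the maximum independent set of $G$ has size $4c_H$ (take no vertex of $G'$ and a port-free optimum in each gadget), whereas your claimed correspondence demands $1+4c_H$. The ``if and only if'' breaks.

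A second gap is that the gadget is never exhibited, and the candidates you name cannot satisfy your own degree constraints: a graph with one vertex of degree $1$ and all others of degree $3$ must have an even number of vertices by the handshake lemma, which rules out ``two triangles sharing an edge plus a pendant port'' (five vertices), and attaching a pendant port to a $3$-regular prism raises one prism vertex to degree $4$. Both issues are repairable. Take $H$ on vertices $p,q,r,s,t,u$ with edges $pq$, $qr$, $qu$, $rs$, $rt$, $st$, $su$, $tu$: then $p$ has degree $1$, every other vertex has degree $3$, the unique maximum independent set is $\{p,r,u\}$ of size $3$, and the largest independent set avoiding $p$ has size $2$, so $b=a-1$ and each copy contributes exactly two new vertices (e.g.\ $\{r,u\}$) regardless of whether $v$ is chosen. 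With $3-d_v$ such copies glued at each deficient vertex $v$, the resulting graph $G$ is $3$-regular and its independence number equals that of $G'$ plus $2\sum_v(3-d_v)$, which completes the reduction you intended.
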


\begin{proof}[Proof of \cref{thm:2_hard}]
For each instance $I=(q, G(U,E))$ where the degree of all vertices in $U$ is $D=3$, we add a vertex $s$ into $G$ and add an edge from each $v\in U$ to $s$. Suppose the new graph is $G'$. Suppose $|U|$ and $q$ are both sufficiently large. We consider the voting instance where the graph representation is $G'$. There are $|U|+1$ vertices, and each vertex represents a candidate. There are $|E|+|U|$ edges representing voters, and each voter approves 2 candidates corresponding to its two adjacent vertices. We let the committee size limit $k=\frac{(|E|+|U|)(q+1)}{q(D+1)}$.

We now prove that the committee $W=\{s\}$ lies in the sub-core if and only if there is no independent set in $G$ of size $q$.  First, if $W$ does not lie in the sub-core, then there is a deviation instance $(S,T)$ on $W$. Suppose that $T$ does not include $s$, then each candidate in $T$ is approved by at most $D$ voters in $T$, we have $|S|\le |T|\cdot D$. By the definition of deviation instance, we have $|T|\le |S|\cdot k/(|E|+|U|)=\frac{(q+1)|S|}{q(D+1)}\le |T|\cdot\frac{(q+1)D}{q(D+1)}<|T|$, leading to a contradiction. Therefore, we have $s\in T$. Since for each candidate $t\in T\setminus\{s\}$, there is an edge between $t$ and $s$, therefore there is a voter approving both $s$ and $t$. Other than these voters who approve $s$ and one candidate in $T_f\setminus\{s\}$, each candidate in $T$ is approved by at most $D$ voters.

Suppose that there are two candidate vertices $c_{j_1}, c_{j_2}$ in $T\setminus\{s\}$ sharing an edge $v$. We delete all edges which is adjacent to $c_{j_2}$ except $v$ from $S$ and delete $c_{j_2}$ from $T$. We denote the instance after the deletion by $(S',T')$, we originally have $|T|\le |S|\cdot \frac{(q+1)}{q(D+1)}$, the deletion decrease the left hand side by 1 and the right hand side by at most $D\cdot\frac{q+1}{q(D+1)}<1$. $(S', T')$ is still a deviation instance on $W$. We continue deleting such vertices and edges until there are no such pair of vertices in $T\setminus\{s\}$ sharing the same edge. We denote the final deviation instance by $(S_f,T_f)$.
    
Since the number of voters in $S_f$ is no more than the number of all edges adjacent to $T_f$, we have $|S_f|= (|T_f|-1)\cdot (D+1)$. If $|T_f|\le q$ then $|S_f|\cdot \frac{(q+1)}{q(D+1)}\le (|T_f|-1)\cdot \frac{q+1}{q}<|T_f|$. This contradicts $|T|\le |S|\cdot \frac{q+1}{q(D+1)}$. Therefore we have $|T_f|\ge q+1$. Since $T_f\setminus\{s\}$ is an independent set on $G$ and $|T_f\setminus\{s\}|\ge q$, there exists an independent set of size $q$ in the graph $G$.  
    
Conversely, if there is an independent set in $G$ of size $q$, then we select the corresponding candidate vertices in the independent set from $G'$ and $s$ to form $T$. We let $S$ be all voters whose corresponding edges are adjacent to vertices $T$. Since the $T\setminus \{s\}$ does not contain two vertices share the same edge, we have $|S|=(|T|-1)\cdot (D+1)= q\cdot (D+1)$. Since $|S|\cdot \frac{q+1}{q(D+1)}=q+1=|T|$, $(S,T)$ is a deviation instance on $W$ and thus $W$ does not lie in the sub-core.   

Therefore, $G$ has an independent set of size $q$ if and only if the corresponding committee $W$ does not lie in the sub-core and this completes the reduction. %Since one can only verify that a committee is \emph{not} in the sub-core in polynomial time by using a deviation instance as certificate, Sub-core Decision is co-NP-complete.    
\end{proof}

\paragraph{Approximately Auditing Sub-core Property.} Similar to $\theta_c$, we can now define a parameter $\theta_{sc}$ showing how close a committee is to the sub-core. 

\begin{definition}
For $\theta \le 1$, a committee $W$ of size $k$ lies in the \emph{$\theta$-approximate sub-core} if for all subsets of voters $T \subseteq [n]$, there is no {\em deviating committee} $T'$ with size at most $\theta \cdot |T| \cdot \frac{k}{n}$, such that for all $i \in T$, we have $A_i\cap W \subsetneq A_i\cap T'$.
\end{definition}

Given a committee $W$, $\theta_{sc}$ is defined as the smallest $\theta$ such that $W$ is not in the $\theta$-approximate sub-core.  We prove the following theorem in \cref{app:subcore}, that shows the sub-core can be approximately audited. This positive result on auditing makes the sub-core a desirable weakening of the core property. The proof is largely similar to the proof of \cref{thm:log_approx}. 

\begin{theorem} [Proved in \cref{app:subcore}]
\label{thm:subcore}
Given any committee $W$, $\theta_{sc}$ has an $O(\min(\log m, \log n))$ approximation in polynomial time, where $m,n$ are the total number of candidates and voters respectively.
\end{theorem}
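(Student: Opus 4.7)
The plan is to mimic the LP-rounding argument of \cref{sec:approx1} with a new mathematical program tailored to the sub-core. Following the setup of \cref{sec:program}, introduce $z_i\in\{0,1\}$ for whether voter $i$ is in the deviating group and $x_j\in\{0,1\}$ for whether candidate $j$ is in the deviating committee $T$. The condition $A_i\cap W\subsetneq A_i\cap T$ decomposes into two linear constraints: (a) $x_j\geq z_i$ for every $i$ and every $j\in A_i\cap W$, forcing every currently approved candidate of a deviating voter to be picked; and (b) $\sum_{j\in A_i\setminus W} x_j\geq z_i$, ensuring that at least one strictly new approved candidate is picked. Minimizing $R\cdot\frac{\sum_j x_j}{\sum_i z_i}$ subject to (a), (b), and integrality defines $\theta_{sc}$, and dropping integrality (together with the usual $\sum_i z_i\geq 1$ normalization) yields a polynomial-time-solvable LP whose optimum $\theta^*$ lower-bounds $\theta_{sc}$.

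Next I would apply the $O(\log m)$ rounding of \cref{sec:roundproof} essentially verbatim, modifying only the feasibility test in Step~4: set $\doublehat z_i=1$ iff (a$'$) $\hat x_j=1$ for every $j\in A_i\cap W$ and (b$'$) $\sum_{j\in A_i\setminus W}\hat x_j\geq 1$. The crucial observation is that (a$'$) is automatic: if $\hat z_i=1$ then $\alpha<z_i$, and for $j\in A_i\cap W$ constraint (a) forces $x_j\geq z_i>\alpha$, so $2x'_j\geq 2x_j>\alpha$ and the thresholding step outputs $\hat x_j=1$ deterministically. For (b$'$), split $A_i\setminus W$ into $P_i=\{j:2x'_j<\alpha\}$ and $Q_i=\{j:2x'_j\geq\alpha\}$; if $|Q_i|\geq 1$ then (b$'$) holds deterministically, and otherwise constraint (b) gives $\sum_{j\in P_i} x_j\geq z_i$, so $\mathbb{E}[\sum_{j\in P_i}\hat x_j\mid\hat z_i=1]\geq 2z_i/\alpha>2$, and the same Chernoff bound used in \cref{sec:roundproof} yields $\Pr[\doublehat z_i=0\mid\hat z_i=1]\leq 2/e$. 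Combined with the unchanged numerator estimate $\mathbb{E}[\sum_j\hat x_j]=O(\log m)\sum_j x_j$, this delivers the $O(\log m)$ approximation.

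For the $O(\log n)$ bound, I would reuse the dyadic partition of the $z_i$ values and restrict attention to voters in the heaviest interval $I_{\ell^*}=(L^*,R^*]$, scaling to $x'_j=\min\{1,2x_j/L^*\}$ and rounding independently. Condition (a$'$) is again automatic, since $j\in A_i\cap W$ together with $i\in I_{\ell^*}$ forces $x_j\geq z_i>L^*$ and hence $\hat x_j=1$. For (b$'$), the decomposition $S_1=\{j:x_j\geq L^*/2\}$ versus $S_2=C\setminus S_1$ and the Chernoff argument of \cref{sec:roundproof} transfer directly with constraint (b) taking the place of the core's utility constraint, establishing a constant success probability and completing the proof.

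The main obstacle is that at first glance set containment looks much more fragile under independent rounding than the single cardinality-style covering constraint treated in \cref{sec:roundproof}. The key insight, and the real content of the argument, is that the explicit LP inequalities $x_j\geq z_i$ on $A_i\cap W$ are respected \emph{deterministically} by the thresholding step, so the randomized analysis collapses to a single covering inequality on the smaller index set $A_i\setminus W$ that is structurally identical to the one already handled in the core proof.
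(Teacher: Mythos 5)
Your proposal is correct and follows essentially the same route as the paper's proof in \cref{app:subcore}: the identical LP with constraints $x_j\ge z_i$ on $A_i\cap W$ and $\sum_{j\in A_i\setminus W}x_j\ge z_i$, the observation that threshold rounding satisfies the containment constraints deterministically whenever $\hat z_i=1$, and a randomized-rounding argument giving constant success probability for the ``one new candidate'' constraint. The only cosmetic differences are that the paper bounds the failure probability via the standard $1-1/e$ coverage argument rather than Chernoff, and it omits the $O(\log n)$ case that you spell out.
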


\section{Conclusion}
Note that our theoretical approximation results for auditing are worst case guarantees. In practice, the linear program value $\theta_p$ will provide a lower bound on $\theta_c$, and if this can be rounded so that the integer solution has value $\alpha \theta_p$ for some small $\alpha \ge 1$, then this sandwiches $\theta_c \in [\theta_p, \alpha \theta_p]$. Further, the rounding outputs a deviating set of voters and their chosen committee, which will be of interest as a demographic that is not well-represented by the current committee.

The main open question arising from this work is closing the gap between the positive and hardness results for auditing the core. As mentioned before, showing such results for density objectives is challenging~\cite{Khot-densest}. A related question is existence results: A major open question in social choice is whether there is a committee in the core for {\sc Approval Elections}. Though there are voting rules that find committees in the approximate core~\cite{peters2020proportionality,DBLP:conf/stoc/JiangMW20,ChenFLM19}, these results do not translate to the exact core. Even more specifically, it is an open question whether there is always a committee that is Lindahl priceable. %Further, if we strengthen the core to Lindahl priceable, then such a committee may again not exist~\cite{Peters21b}.

Finally, it would be interesting to use the techniques in this paper to approximately audit other notions of fairness or efficiency in social choice. For instance, consider the notion of {\em extended justified representation} (or EJR,~\cite{Aziz16}), where a group of $t \cdot n/k$ voters can only deviate if they all approve at least $t$ candidates in common. Since this notion is weaker than the core, it is easier to show existence -- indeed the PAV rule~\cite{thiele1895om} satisfies EJR but fails the core. However, imposing restrictions on the deviation does not necessarily make it easier to audit such notions~\cite{Peters21b}, and we do not know how to audit EJR approximately. %Indeed, an intuitive reason why we could audit the core is that its relaxation, Lindahl priceability, decouples voter behavior via prices. 
We showed a particular weakening of the core, the sub-core, that can be approximately audited, and it would  be interesting to study the landscape of efficient auditing more systematically.

\paragraph{Acknowledgment. } The original version of the paper erroneously claimed that Weak Priceability in \cref{sec:subcore} was equivalent to the priceability definition in~\cite{peters2020proportionality}. We thank Emin Berker for pointing out that our definition is strictly weaker.

%\newpage
\bibliographystyle{abbrv}
\bibliography{ref}

%\newpage
\appendix
\section{Other Hardness Results from \cref{sec:hard}}
\label{app:hard}
Recall that we are considering the {\sc Approval Election} setting. We present some additional hardness results for computing $\theta_c$.

\subsection{Proof of \cref{thm:hard1}}
\label{app:hard1}
The following result is shown in~\cite{Corneil}:

\begin{lemma}[Corollary 2.4 in~\cite{Corneil}]
Given a Hamiltonian undirected graph $G(V,E)$ with degree at most $6$, determining if it has a $3$-regular subgraph is {\sc NP-Complete}.
\end{lemma}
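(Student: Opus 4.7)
The plan is to reduce from the 3-regular subgraph problem in Hamiltonian graphs of maximum degree 6, which is NP-hard by the Corneil corollary above. Given such a graph $G(V,E)$ with Hamiltonian cycle $H$ and $n = |V|$ (assumed even; if odd, replace $G$ by two disjoint copies, since a 3-regular subgraph of a disjoint union must lie entirely in one component, this preserves the decision problem), I would build the auditing instance by identifying voters with vertices and candidates with edges: voter $v$ approves exactly the set $A_v$ of edges incident to $v$. This automatically satisfies both sparsity constraints, since $|A_v| = \deg_G(v) \le 6$ and each edge lies in exactly the two approval sets corresponding to its endpoints. I would then take the committee $W$ to be the edge set of the Hamiltonian cycle $H$, so $|W| = n$ and $U_v(W) = 2$ for every voter $v$. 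Finally, I set $k = 3n/2$, so $|W| = n \le k$ and $k/n = 3/2$.

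The claim to prove is that $\theta_c \le 1$ for $W$ if and only if $G$ contains a 3-regular subgraph. The ``if'' direction is immediate: if $H'$ is a 3-regular subgraph of $G$ with vertex set $S$, then $(S, E(H'))$ is a deviation, since $|E(H')| = 3|S|/2 = |S| \cdot (k/n)$ and each $v \in S$ obtains utility $3 > U_v(W) = 2$ from $E(H')$. For the converse, suppose $(S, T)$ is a deviation witnessing $\theta_c \le 1$, so $|T| \le 3|S|/2$ and $|A_v \cap T| \ge U_v(W) + 1 = 3$ for every $v \in S$. A double-counting of incidences between $S$ and $T$ yields
\[
3|S| \;\le\; \sum_{v \in S} |A_v \cap T| \;=\; \sum_{e \in T} |e \cap S| \;\le\; 2|T| \;\le\; 3|S|,
\]
forcing equality throughout: every edge of $T$ has \emph{both} endpoints in $S$, and every vertex in $S$ has \emph{exactly} three incident edges in $T$. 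Thus $T$ induces a 3-regular subgraph of $G$ on $S$.

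The main obstacle is engineering the reduction so that the tight double-counting above simultaneously forces $T$ to lie inside the induced subgraph on $S$ \emph{and} each vertex of $S$ to have degree exactly three in $T$; this is precisely what the choice $k/n = 3/2$ achieves, and it is also the reason for the integrality/doubling step when $|V|$ is odd. The Hamiltonian hypothesis of the Corneil corollary is used in exactly one place, namely to supply a committee $W$ that makes the baseline utility $U_v(W) = 2$ uniform across voters; without this, one could not cleanly align the deviation threshold with the degree-3 condition.
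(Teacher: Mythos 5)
You have not proved the statement in question; you have assumed it. The statement is the cited graph-theoretic result itself (Corollary 2.4 of~\cite{Corneil}): that deciding whether a Hamiltonian graph of maximum degree $6$ has a $3$-regular subgraph is {\sc NP-Complete}. The paper treats this as a black box imported from the literature and offers no proof beyond the citation; a genuine proof would require an independent reduction (e.g.\ from 3-SAT or from the general cubic-subgraph problem) together with gadgetry guaranteeing that the constructed graph is Hamiltonian and has maximum degree $6$. Your very first sentence invokes ``the Corneil corollary above'' as the source of hardness, which is precisely the assertion to be established, so as a proof of this lemma the argument is circular and contains none of the required content.

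What you have actually written is a proof of the downstream result, \cref{thm:hard1}, and for that theorem your argument is correct and essentially identical to the paper's proof in \cref{app:hard1}: voters are vertices, candidates are edges, $W$ is the Hamiltonian cycle so that every voter has baseline utility $2$, and $k = 3n/2$. Your explicit double-counting chain
\[
3|S| \le \sum_{v \in S} |A_v \cap T| = \sum_{e \in T} |e \cap S| \le 2|T| \le 3|S|
\]
is a clean packaging of the paper's case analysis (``some vertex has degree $\ge 4$ or some edge leaves $S$, hence $2|E'| > 3|V'|$''), and the remark about doubling the graph when $|V|$ is odd is a detail the paper glosses over (though it is harmless there, since a $3$-regular subgraph has an even number of vertices and $k$ need not be integral). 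None of this, however, bears on the {\sc NP-Completeness} of the $3$-regular subgraph problem itself, which is the statement you were asked to prove.
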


Given such a Hamiltonian graph $G(V,E)$ with $n$ vertices and $m \le 3n$ edges, we make each edge a candidate and each vertex a voter, who approves her incident edges. Therefore, each voter approves at most $6$ candidates and each candidate is approved by at most $2$ voters. The original committee $W$ that needs to be audited is the Hamiltonian subgraph. This gives each voter in $V$ a utility of exactly $2$, so that they need a utility of at least $3$ to deviate. We set $k = 3n/2$, so that $R = n/k = \frac{2}{3}$. 

Suppose $G$ has a $3$-regular subgraph $H(V',E')$. Consider the set of voters $V'$ deviating to committee $E'$. Since $|E'| = 1.5 |V'|$, we have:
$$ \theta_c \le R \cdot\frac{|E'|}{|V'|} = 1. $$

On the other hand, suppose $G$ does not have a $3$-regular subgraph. Let $V'$ be any set of voters that deviate using a set $E'$ of edges. Each voter has at least $3$ edges incident. This means either some $v \in V'$ has degree at least $4$, or there is some $e \in E'$ such that one of its end-points is not in $E'$. In either case, $2|E'| > 3 |V'|$, so that $\theta_c > 1$. Therefore, deciding if $\theta_c \le 1$ (that is, if $W$ does not lie in the core) on such instances is {\sc NP-Hard}.

\subsection{Hardness of Auditing Fractional Committees}
\label{app:hard2}
We will next show that the auditing problem remains hard to approximate even when candidates can be chosen {\em fractionally}. This setting models several real-world participatory budgeting elections, where the amount of money allocated to a project can be continuous within a range. Mathematically, this corresponds to allowing the variables $\{x_j\}$ to be fractional in $[0,1]$ in the program in \cref{sec:program}, while insisting voters deviate only if they get a fixed amount of additional utility on deviation. Formally, 

\begin{definition}
\label{def:fractional_core}
For $\theta \le 1$ and constant $\eta \in (0,1]$, a committee $\vec{x} \in [0,1]^m$ with $\sum_j s_j x_j \le k$ lies in the \emph{$(\theta,\eta)$-approximate fractional core} if for all $S \subseteq [n]$, there is no \emph{deviating committee} $\vec{y} \in [0,1]^m$ with  $ \sum_j s_j y_j \le \theta\cdot |S|\cdot \frac{k}{n}$, such that for all $i \in S$, we have $U_i(\vec{y}) \ge U_i(\vec{x}) + \eta$.
\end{definition}

%We assume this fixed amount is one, and in our reduction, this is arbitrarily small compared to the committee size. 

We show the following hardness of approximation result.

\begin{theorem}\label{thm:np_hard_frac}
For any $0<\eta\le 1$ and any $\gamma > 0$, distinguishing instances that do not lie in the $(\theta_c, \eta)$-approximate fractional core from those that lie in the $(\theta_c (1.1839 - \gamma), \eta)$-approximate fractional core  is {\sc NP-Hard}.
\end{theorem}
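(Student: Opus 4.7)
The plan is to adapt the reduction from Regular Max Coverage (\cref{lem:regular_cover}) used in the proof of \cref{thm:standard_inapprox}, keeping the same skeleton: main candidates in bijection with the $\xi$ sets, dummy candidates, a first group of voters each approving every main candidate plus a disjoint block of dummies, and a second group of ``element voters'' each approving the main candidates corresponding to the sets that contain them. The budget $k$, the dummy-block size $g$ per first-group voter, the first-group multiplicity $f$, and the number of padding voters will be chosen as functions of $\eta$, $q$, and $d$ so that the YES/NO gap in $\theta_c$ is $1.1839-\gamma$; this is unlike the integer reduction of \cref{thm:standard_inapprox}, where the simple choices $f=qd/e$ and $g=q-1$ suffice.

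For YES instances I would exhibit the obvious integer deviation selecting the $q$ covering main candidates: every first-group voter jumps from utility $g$ to $q$ (so the construction enforces $q-g\ge \eta$, i.e.\ $g\le q-\eta$), and every element voter jumps from $0$ to at least $1\ge \eta$. This gives an upper bound $\theta_c\le T_Y=Rq/(f+qd)$. For NO instances I would argue the matching lower bound $\theta_c\ge T_N$ by case-splitting on the total main-candidate weight $s=\sum_{j\text{ main}}x_j$ of any fractional deviation. In the small-$s$ regime, the identity $\sum_e\bigl(\sum_{S\ni e}x_S\bigr)=ds$, which uses that each set has size exactly $d$, bounds the number of satisfied element voters by $ds/\eta$ and forces $\theta\ge R\eta/d$. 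In the large-$s$ regime ($s$ close to $q$), I would randomly round $x$ by including each set independently with probability proportional to $x_j$, producing an integer cover of size roughly $q$, and invoke the NO guarantee of \cref{lem:regular_cover} to cap its element coverage at $(1-1/e+\epsilon)qd$ in expectation; concentration then transfers this cap back to the fractional $x$, penalizing deviations that put significant weight on a few sets. First-group voters require $s+t_i\ge g+\eta$ on their own dummy block, which injects additional weight $t$ into the numerator of $\theta$, and would be handled by an analogous case split superimposed on the one above.

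The hard part will be exactly this NO case: the fractional relaxation admits a trivial uniform-spreading deviation that puts weight $\eta q/\xi$ on every set, giving each element coverage exactly $\eta$ and satisfying every element voter, without using any property of the underlying Max Coverage instance. In a vanilla construction this already collapses the YES/NO gap to $1$, because the same uniform deviation is available in YES instances. My plan is to calibrate $f$ and $g$ so that (i) the uniform-spreading deviation yields exactly $\theta=R\eta/d$, matching the sum bound tightly from above, and (ii) every ``partially spread'' deviation either falls under the sum bound $R\eta/d$ or puts enough weight on concentrated sets to trigger the rounded-coverage cap from \cref{lem:regular_cover}, while the YES deviation remains strictly below both thresholds. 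Optimizing the ratio $T_N/T_Y$ over $(f,g)$ subject to these two binding constraints, and doing so uniformly across every $\eta\in(0,1]$, is where the specific constant $1.1839$ emerges; carrying out this optimization without losing any factor in either the sum bound or the rounding step is the delicate piece of the argument.
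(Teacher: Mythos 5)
There is a genuine gap, and you have in fact located it yourself: the uniform-spreading deviation. The problem is that your proposed remedy --- calibrating the first-group multiplicity $f$ and the dummy-block size $g$ --- cannot close it. In your construction every voter who is satisfiable at all (a first-group voter needing $+\eta$ from candidates she all approves, or an element voter needing $+\eta$ from the sets containing her element) is satisfied by \emph{any} fractional allocation that places total weight $\eta q$ spread proportionally over the main candidates, since utilities are additive and the threshold for every voter is only an additive $\eta$. This deviation satisfies exactly the same set of voters as your intended YES-certificate of weight $q$, but at an $\eta$-fraction of the cost, and it is available in YES and NO instances alike; so $\theta_c$ is governed by the same quantity $R\eta q/(f+qd)$ on both sides no matter how $f$ and $g$ are chosen, and the gap is $1$. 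No tuning of parameters that merely rescales numerator and denominator uniformly can separate the two cases, because the separation must come from making fractional weight \emph{structurally} less useful than integral weight.

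The paper's proof supplies precisely the missing gadget: in addition to the element voters and a group approving all main candidates, it attaches to \emph{each} main candidate a block of $pd$ voters who approve only that one candidate plus a private dummy allocated at level $1-\eta$. Such a voter needs total utility $1$ from the deviation, and (after a swapping argument showing dummies can be dropped) this forces her single approved main candidate to be allocated \emph{integrally}. Fractional spreading therefore still satisfies the element voters and the all-approving group, but forfeits the entire $\xi\cdot pd$ population, which is what restores a gap; the NO-side analysis then splits the allocation into its integral part $S_1$ and fractional part $S_2$, applies \cref{lem:regular_cover} to an arbitrary $q$-subset of $S_1$, bounds the fractional contribution by $|S_2|\cdot d$ deterministically (no randomized rounding or concentration is needed), and optimizes the two group-size parameters to get the ratio $\tfrac{2+1/e}{2}=1+\tfrac{1}{2e}\approx 1.1839$. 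Without this per-candidate voter block, or some substitute mechanism that penalizes fractional allocation of individual candidates, your reduction cannot produce a gap larger than $1$.
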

\begin{proof}
The proof is a reduction from the Regular Maximum Coverage(\cref{lem:regular_cover}) and is similar to that in \cref{thm:standard_inapprox}. Based on \cref{lem:regular_cover}, we construct the following instance for auditing the core:
\begin{itemize}
    \item There are $\xi$ main candidates. Each candidate corresponds to a set in the covering instance. 
    \item There are 3 groups of voters, denoted by $G_1$, $G_2$ and $G_3$. 
    \item $G_1$ contains $qd$ voters. Each of these voters corresponds to an element of the covering instance and approves the main candidates whose corresponding set contains their corresponding element. 
    \item $G_2$ contains $\xi\cdot pd$ voters. These are divided into $\xi$ subgroups of $pd$ voters and each subgroup approves a single main candidate. 
    \item For each voter in $G_1$ and $G_2$, we add one dummy candidate $j$ that only this voter approves. In the allocation $\vec{y}$ that we seek to audit, we set $y_j = 1-\eta$. Therefore, each voter in $G_1$ and $G_2$ has initial utility $1-\eta$. 
    \item   $G_3$ contains $c\cdot qd$ voters. For each voter in $G_3$, we add $\lceil q-\eta\rceil$ dummy candidates that only this voter approves. For each such candidate $j$, we set $y_j = \frac{q-\eta}{\lceil q-\eta\rceil}$. Thus, the initial utility of each voter in $G_3$ is $q-\eta$. These voters also approve all the main candidates.
\end{itemize}
In total, there are $m_0 = \xi\cdot pd+c\cdot qd\cdot \lceil q-\eta\rceil$ dummy candidates, each approved by one voter. There are in total $n = qd+\xi\cdot pd+c\cdot qd$ voters. We choose $k \ge m_0$ and set $R = n/k$.

We say a voter is ``satisfied'' if their utility increases by at least $\eta$ on deviation. In order to make each voter in $G_1$ and $G_2$ satisfied, their utility must be at least $1$. In order to make each voter in $G_3$ satisfied, their utility must be at least $q$. 

If the covering instance is a ``YES'' instance, we can select $q$ main candidates that cover the $qd$ voters in $G_1$. Each main candidate can also satisfy $pd$ voters in $G_2$. Since we have chosen $q$ candidates, all voters in $G_3$ are also satisfied. Therefore, these $q$ candidates can satisfy at least $qd+ qpd+cqd$ voters. We have $$\theta_c\le R\cdot \frac{q}{qd+p\cdot qd+c\cdot qd}=\frac{R}{d}\cdot \frac{1}{1+p+c}.$$

If the covering instance is a ``NO'' instance, we first argue that there is a deviating (optimal) allocation $\vec{x}$ where no dummy candidate has strictly positive allocation. Suppose that that $x > 0$ is the allocation to a dummy candidate $c_d$ and $c_d$ is only approved by the voter $v_d$. If $v_d$ is not satisfied, then we can reduce $x$ to $0$ to decrease $\theta_c$. If $v_d$ is satisfied, consider the main candidates she approved without $c_d$. Since all voters can be satisfied had the main candidates all been chosen integrally, we can reduce $x_j$ to $0$ and increase the main candidates' allocation by at most $x_j$ to make $v_d$ satisfied. This does not increase $\theta_c$. Therefore, without loss of generality, we can assume that no dummy candidate has positive allocation in the deviation. 

For the main candidates, suppose that the sum of integral allocations ($x_j=1$) is $S_1$ and the sum of fractional allocations ($0<x_j<1$) is $S_2$. 

If $S_1+S_2<q$, then the voters in $G_3$ cannot be satisfied. Since all voters in $G_1$ need utility at least $1$ to be satisfied, at most $d\cdot (|S_1|+|S_2|)$ of voters in $G_1$ are satisfied. Since each integrally chosen main candidate can satisfy $pd$ voters in $G_2$, we can satisfy at most $d\cdot(|S_1|+|S_2|)+pd\cdot |S_1|$ voters in $G_2$. Therefore, we have $$
\theta_c\ge R \cdot\frac{|S_1|+|S_2|}{|S_1|\cdot (1+p)d+|S_2|\cdot d}\ge \frac{R}{d}\cdot\frac{1}{1+p}.
$$

If $|S_1|+|S_2|\ge q$, then all the voters in $G_3$ are satisfied. Suppose that $|S_1|\le (1-1/e)\cdot q$, then we have $S_2\ge q/e$. Let $\kappa =|S_2|/(|S_1|+|S_2|)\ge 1/e$. We can satisfy at most $(|S_1|+|S_2|)\cdot d$ voters in $G_1$. Since a main candidate needs to be integrally open to satisfy a voter in $G_2$, we can satisfy at most  $|S_1|\cdot p\cdot d$ voters in $G_2$.  Therefore, we have
$$
\theta_c \ge R \cdot\frac{|S_1|+|S_2|}{(|S_1|+|S_2|)\cdot d+|S_1|\cdot p\cdot d+c\cdot qd}\ge \frac{R}{d}\cdot \frac{1}{1+(1-\kappa)\cdot p+c}\ge \frac{R}{d}\cdot \frac{1}{1+(1-1/e)\cdot p+c}.
$$

If $(1-1/e)\cdot q\le |S_1|\le  q$, then using only $S_1$, we can satisfy at most $(1-1/e+\eps)\cdot q\cdot d$ voters in $G_1$. After adding fractional allocations $S_2$, we can satisfy at most $|S_2|\cdot d$ more voters on $G_1$. We can further satisfy at most $|S_1|\cdot p\cdot d$ voters in $G_2$, and all voters in $G_3$. Using $\kappa =|S_2|/(|S_1|+|S_2|)$,  we have
\begin{align*}
    \theta_c &\ge R \cdot\frac{|S_1|+|S_2|}{(1-1/e+\eps)\cdot q\cdot d+|S_1|\cdot p\cdot d+c\cdot q\cdot d+|S_2|\cdot d}\\
    &\ge \frac{R}{d} \cdot\frac{1}{(1-1/e+\eps)+(1-\kappa)\cdot p+c+ \kappa}.
\end{align*}

If $|S_1|>q$, then we first select any subgroup of $q$ candidates from $S_1$. These candidates can satisfy at most $(1-1/e+\eps)\cdot dq$ voters in $G_1$. After adding the rest of $S_1$ and $S_2$ we can satisfy at most $(|S_2|+|S_1|-q)\cdot d$ more voters in $G_1$. We can further satisfy at most $|S_1|\cdot pd$ voters in $G_2$ and all voters in $G_3$. Denote $r=\frac{|S_1|+|S_2|}q -1$. Therefore, we have
\begin{align*}
    \theta_c &\ge R \cdot\frac{|S_1|+|S_2|}{(1-1/e+\eps)\cdot q\cdot d+|S_1|\cdot p\cdot d+c\cdot qd+(|S_1|+|S_2|-q)\cdot d}\\
    &\ge R \cdot\frac{|S_1|+|S_2|}{(1-1/e+\eps)\cdot q\cdot d+(|S_1|+|S_2|)\cdot p\cdot d+c\cdot qd+(|S_2|+|S_1|-q)\cdot d}\\
    &\ge \frac{R}{d} \cdot\frac{1+r}{(1-1/e+\eps)+(1+r)\cdot p+c+r}\\
    &= \frac{R}{d} \cdot\frac{1}{\frac{-1/e+c+\eps}{1+r}+ 1 +p}.
\end{align*}

By setting $p=1$, $c=\kappa=1/e$ and taking the minimum over the previous four lower bounds, we have $\theta_c\ge \frac{R}{d}\cdot \frac{1}{2+\eps}$ if the covering instance is ``NO'' instance. If the covering instance is ``YES'', we have $\theta_c\le \frac{R}{d}\cdot \frac{1}{2+1/e}$. Since the gap of $\theta_c$ between the constructed instance from ``YES'' and from ``NO'' instances is at least $\frac{2+1/e}{2+\eps}=1+\frac{1}{2e}-o(1)$, approximating the $\theta_c$ to within this factor is {\sc NP-Hard}.
\end{proof}

\section{General Utilities and Sizes: Proof of \cref{thm:arbitrary}}
\label{app:gen_approx}
\subsection{$O(\log m)$ Approximation}
The rounding algorithm is similar to that in \cref{sec:roundproof}. We first scale the solution so that all variables are in $[0,1]$, and $\max_i z_i = 1$. We now apply the following steps:
\begin{itemize}
    \item [1.] Pick $\alpha \in [0,1]$ uniformly at random.
    \item [2.] Let $x_j'=\max\{\frac{1}{2m^2},x_j\}$.
    \item [3.] If $z_i> \alpha$, then set $\hat{z}_i=1$; else set $\hat{z}_i=0$.
    \item [4.] If $2x_j'> \alpha$, then set $\hat{x}_j=1$; else set $\hat x_j=1$ with probability $2x_j'/\alpha$. We round each $\hat{x}_j$ independently.
\end{itemize}

At this point, for each $i$ with $\hat z_i=1$, let $P_i\triangleq\{j\in A_i: 2x_j'<\alpha\}$ and $Q_i\triangleq\{j\in A_i:2x_j'\ge \alpha\}$. Note that $\h x_j = 1$ for $j \in Q_i$. Let $\h u_i = \max(0,U_i(W) + 1 - U_i(Q_i))$, and $\h u_{ij} = \min(u_{ij}, \h u_i)$. Since we assumed $u_{ij}$ are integers, these values remain non-negative integers. The LP variables satisfy the knapsack cover constraint for set $P_i$:
\begin{equation}
\label{eq:knap_cover2}
    \sum_{j \in P_i} \h u_{ij} \min(x_j, z_i) \ge  z_i \h u_i.
\end{equation}

For $i$ with $\h z_i = 1$, as a final step, check if $\sum_{j \in P_i} \h u_{ij}  \hat{x}_j \ge \h u_i$. If so, set $\doublehat{z}_i=1$; else set $\doublehat{z}_i=0$.

\paragraph{Analysis Sketch.} The analysis follows the same outline as \cref{sec:roundproof}, and we highlight the differences. Consider the $i$ for which $z_i = 1$. The knapsack cover inequalities for this voter now imply
$$\sum_j \min(U_i(W) + 1, u_{ij}) x_j \ge U_i(W) + 1 \qquad \Rightarrow \qquad \sum_j x_j \ge 1.$$ 
Since all $s_j \in [1/m,1]$, we have $\sum_j s_j x_j \ge 1/m$. Since $\sum_j s_j (x'_j - x_j) \le \sum_j \frac{1}{2m^2} \le \frac{1}{2m}$, the increase in objective by using $x'_j$ instead of $x_j$ remains a constant factor. Proceeding as before, 
$$\mathbb{E}\left[\sum_j s_j \h x_j\right] = O(\log m) \sum_j s_j x_j.$$

We now bound $\mathbb{E}\left[\sum_i \doublehat{z}_i\right]$. Since $\h x_j = 1$ for $j \in Q_i$, we have
\begin{align*}
    \Pr\left(\doublehat{z}_i=0 \mid \h z_i=1\right)&= \Pr\left(\sum_{j\in P_i}  \h u_{ij}  \hat{x}_j < \h u_i\,\Big\vert\, \h z_i=1\right).
\end{align*}

When $\h z_i=1$, we have $\alpha<z_i$. Using \cref{eq:knap_cover2}, we have:
$$\mathbb{E}\left[\sum_{j\in P_i}  \h u_{ij} \h x_j\right]
\ge 2\cdot\left(\sum_{j\in P_i}  \h u_{ij} x_j'\right)\big/\alpha 
\ge 2\cdot \h u_i \cdot z_i/\alpha\ge 2\cdot \h u_i.$$ 
Note that $\h u_{ij} \le \h u_i$ which allows us to use Chernoff Bounds. We therefore have:
$$
\Pr\left(\sum_{j\in P_i} \h u_{ij} \h x_j  < \h u_i \right)< e^{-1/4}.
$$
By linearity of expectation,
$$ \mathbb{E}\left[\sum_i \doublehat{z}_i\right] = \sum_i \mathbb{E}\Big[\hat{z}_i\cdot \big(1-\Pr(\doublehat{z}_i=0 \mid \h z_i=1)\big)\Big] \ge \sum_i \mathbb{E}\Big[\hat{z}_i\cdot (1-e^{-1/4})\Big] \ge \frac{1}{5} \cdot \sum_i z_i. 
$$
Note that if $\sum_{j\in P_i} \h u_{ij} \h x_j  \ge \h u_i$, then $\sum_{j \in P_i \cup Q_i} u_{ij} \h x_j \ge U_i(W) + 1$, so that the constraint for voter $i$ is indeed satisfied if $\doublehat{z}_i = 1$. Taking the ratio of $\sum_j \h x_j$ and $\sum_i \doublehat{z}_i$ just as before, this yields an $O(\log m)$ approximation to $\theta_c$.  

\subsection{$O(\log n)$ Approximation}
Again, the rounding algorithm is similar to that in \cref{sec:roundproof}. We first scale the solution first so that all variables are in $[0,1]$, and $\max_i z_i = 1$. We now apply the following steps:
\begin{enumerate}
    \item Construct $\omega=\lfloor\log_2 n\rfloor$ intervals: $I_0=[0, 1/n], I_1=(1/n, 2/n], \ldots, I_{\omega}=(2^{\omega}/n, 1]$.
    \item If $z_i \in I_0$, set $\h z_i=0$. 
    \item Let $\ell^*=\arg\max_{\ell} \sum_{i:z_i\in I_\ell} z_i$. Suppose $I_{\ell^*}=(L^*, R^*]$. For all $z_i\notin I_{\ell^*}$, set $\h z_i=0$. 
    \item Set $x_j' =\min \{1, 2 x_j/L^*\}$. For each $j$ independently,  $\h x_j=1$ w.p. $x_j'$ and $0$ otherwise.
\end{enumerate}

Let $S_1 = \{j \mid x_j \ge L^*/2\}$ and $S_2 = C \setminus S_1$. Note that for any $i \in I_{\ell^*}$, we have $x_j \le z_i$ for $j \in S_2$. Further, note that $x_j' = 1$ for $j \in S_1$. Let $\h u_i = \max(0,U_i(W) + 1 - U_i(S_1))$, and $\h u_{ij} = \min(u_{ij}, \h u_i)$. Since we assumed $u_{ij}$ are integers, these values remain non-negative integers. The LP variables satisfy the knapsack cover constraint for set $S_2$:
\begin{equation}
\label{eq:knap_cover3}
    \sum_{j \in S_2} \h u_{ij} \min(x_j, z_i) \ge  z_i \h u_i \qquad \forall i \in I_{\ell^*}.
\end{equation}

For each $i\in I_{\ell^*}$, check if $\sum_{j\in S_2} \h u_{ij} \h x_j \ge \h u_i$. If so, set $\h z_i = 1$, else set $\h z_i = 0$. 

\paragraph{Analysis Sketch.} The analysis follows the same outline as \cref{sec:roundproof}, and we highlight the differences. As in that proof, letting $Q = |\{i \mid z_i \in I_{\ell^*}\}|$, we obtain:
$$ \omega \cdot \frac{\sum_j s_j x_j}{\sum_i z_i} \ge  \frac{\sum_j s_j x_j}{\sum_{i: z_i \in I_{\ell^*}} z_i} \ge \frac{\sum_j s_j x_j}{Q\cdot  R^*} \ge \frac{\sum_j s_j x_j}{Q\cdot  2L^*} \ge \frac{1}{4} \frac{\sum_j s_j x'_j}{Q} = \frac{1}{4} \frac{\mathbb{E}[\sum_j s_j \h x_j]}{Q}. $$
Dividing \cref{eq:knap_cover3} by $z_i/2$ (which is at least $L^*$) and observing that $x'_j \ge 2x_j/z_i$, we have:
$$  \sum_{j \in S_2} \h u_{ij} x_j' \ge  2 \h u_i \qquad \forall i \in I_{\ell^*}. $$
Since $\h u_{ij} \le \h u_i$, we now use Chernoff bounds to obtain for any $i \in I_{\ell^*}$:
$$ \Pr \left(\sum_{j\in S_2} \h u_{ij} \h x_j  < \h u_i \right)< e^{-1/4}.$$
Therefore, $\mathbb{E}\left[\sum_i \hat{z}_i\right]\ge \frac{Q}{5}$. Therefore,
$$ \frac{\mathbb{E}[\sum_j s_j \h x_j]}{\mathbb{E}[\sum_i \h z_i]} \le 20 \omega  \cdot \frac{\sum_j s_j x_j}{\sum_i z_i},$$
which implies an $O(\log n)$ approximation to $\theta_{c}$.

\section{Lindahl Pricability for General Utilities and Sizes}
\label{sec:lindahl_fractional}
We consider the  Participatory Budgeting setting where the utilities and sizes are general. Let $u_{ij}$ denote the utility of candidate $j$ for voter $i$, and let $s_j$ denote the size of candidate $j$. 

\subsection{Fractional Allocations}
We first suppose candidates can be allocated fractionally, so that the allocation is represented as $x_j \in [0,1]^n$. Denote the utility of voter $i$ as $U_i(\vec{x}) = \sum_{j \in C} u_{ij} x_j$. We define approximate Lindahl priceability as follows:

\begin{definition}[$(\theta,\eta)$-Approximate Fractional Lindahl Priceability]
\label{def:approxlindahl2}
For constant $\eta > 0$, a committee $\vec{x} \in [0,1]^m$ with $\sum_j s_j x_j \le k$ is \emph{$(\theta,\eta)$-approximate Lindahl priceable} if there exists a price system $\{p_{ij}\}$ from each voter $v_i$ to each candidate $c_j$, such that the following two conditions hold:
\begin{itemize}
    \item [1.] $\forall j\in [m],\ \sum_i p_{ij} \le R$.
 %   \item [2.] $\forall i\in [n],\ \sum_{j \in C} p_{ij} s_j x_j \le 1$.
    \item [2.] $\forall i\in [n],\ \vec{y} \in [0,1]^m$, if $U_i(\vec{y}) \ge U_i(\vec{x}) + \eta$, then $\sum_{j \in C} p_{ij} s_j y_j > \theta.$
\end{itemize}
\end{definition}

Given a committee $\vec{x}$ and a constant $\eta > 0$, let $\theta_{\ell}$ denote the smallest $\theta$ for which the committee is not $(\theta, \eta)$-Lindahl priceable. We can extend the result in \cref{sec:comp} to show the following theorem:

\begin{theorem}
The value of $\theta_{\ell}$ can be computed in polynomial time.
\end{theorem}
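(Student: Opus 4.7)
The plan is to generalize the LP-duality argument of Theorem 6.2 to fractional allocations with arbitrary additive utilities and candidate sizes. Fix a committee $\vec{x}$ and a candidate price system $\{p_{ij}\}$. For each voter $i$, the minimum entitlement needed to deviate to any feasible $\vec{y}$ with $U_i(\vec{y}) \ge U_i(\vec{x}) + \eta$ is the optimum of the fractional knapsack-covering LP
\begin{align*}
    \text{Minimize } & \sum_{j \in C} p_{ij} s_j\, y_{j} \\
    \text{s.t. } & \sum_{j \in C} u_{ij}\, y_{j} \ge U_i(\vec{x}) + \eta,\\
    & 0 \le y_{j} \le 1 \quad \forall j \in C.
\end{align*}
I would then take the LP dual with multipliers $\lambda_i \ge 0$ for the covering constraint and $\alpha_{ij} \ge 0$ for each box constraint $y_{j}\le 1$, yielding
\begin{align*}
    \text{Maximize } & (U_i(\vec{x})+\eta)\lambda_i - \sum_{j \in C} \alpha_{ij} \\
    \text{s.t. } & u_{ij}\lambda_i - \alpha_{ij} \le p_{ij} s_j \quad \forall j \in C,\\
    & \lambda_i, \alpha_{ij} \ge 0.
\end{align*}
By strong LP duality this value equals the per-voter primal optimum, i.e., the true minimum entitlement required for voter $i$ to deviate under the given prices.

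Next, Condition 2 of Definition 7.1 forces this quantity to be at least $\theta$ for every $i$. Pooling across all voters and combining with Condition 1, $\theta_\ell$ equals the optimum of the single polynomial-sized LP
\begin{align*}
    \text{Maximize } & \theta \\
    \text{s.t. } & (U_i(\vec{x}) + \eta)\lambda_i - \sum_{j \in C} \alpha_{ij} \ge \theta \quad \forall i \in [n],\\
    & u_{ij}\lambda_i - \alpha_{ij} \le p_{ij} s_j \quad \forall i \in [n],\ j \in C,\\
    & \sum_{i=1}^n p_{ij} \le R \quad \forall j \in C,\\
    & \lambda_i, \alpha_{ij}, p_{ij} \ge 0,
\end{align*}
which has $O(mn)$ variables and constraints. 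Exactly as in Theorem 6.2, this optimum cannot exceed $\theta_\ell$: any strictly larger feasible $\theta'$ would render every per-voter dual feasible at value $\theta'$ and therefore every per-voter primal minimum at least $\theta'$, contradicting the definition of $\theta_\ell$. Conversely, any $\theta < \theta_\ell$ is witnessed by the prices achieving $(\theta,\eta)$-Lindahl priceability together with the per-voter optimal dual multipliers. Hence solving this LP in polynomial time—for instance by an interior-point method—returns $\theta_\ell$.

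The main obstacle, such as it is, lies in reconciling the strict inequality $>\theta$ in Definition 7.1 with the non-strict $\ge\theta$ used by the LP; this is resolved by defining $\theta_\ell$ as the supremum of priceable $\theta$, so that the supremum is attained at the LP optimum and no strict gap arises. Notably, unlike Section 5 for auditing the core, no knapsack-cover strengthening is required here, because fractional allocation already makes the per-voter deviation LP tight. As a bonus observation, taking the dual of the combined LP exhibits $\theta_\ell$ as the optimum of a fractional core-auditing relaxation over general sizes and utilities, extending the surprising equivalence already seen in the approval setting of Theorem 6.2.
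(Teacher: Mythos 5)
Your proposal is correct and follows essentially the same route as the paper: per-voter deviation LP under fixed prices, LP duality to get the $(\lambda_i,\alpha_{ij})$ multipliers, then pooling all voters' duals with the price-feasibility constraint $\sum_i p_{ij}\le R$ into one polynomial-sized LP whose optimum is $\theta_\ell$, exactly as in the proof of \cref{thm:lindahl}. Your side remarks (no knapsack-cover inequalities needed in the fractional case, and the combined dual being a fractional core-auditing relaxation) are consistent with the paper's treatment.
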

\begin{proof}
Fix a committee $\vec{x}$. Following the proof in \cref{thm:lindahl}, let $u_i = U_i(\vec{x}) = \sum_j u_{ij} x_j$.  Fix the prices $p_{ij}$ achieving the Lindahl priceability ratio. Then the minimum entitlement needed for a voter $i$ to deviate to a committee of utility at least $u_i + \eta$ is captured by the following linear program, where $\gamma_{ij} \in [0,1]$ is the fraction to which candidate $j$ is allocated in the solution.
\begin{gather*}
    \text{Minimize $ \sum_{j\in A_i} p_{ij} s_j \gamma_{ij}$},\ \text{s.t.}\\
   \sum_{j \in C} u_{ij} \gamma_{ij}\ge u_i+ \eta;\\
    \forall j\in C,\ \gamma_{ij}\le 1; \\
    \forall j\in C,\ \gamma_{ij}\ge 0.
\end{gather*}
We take the dual as in that proof, and finally obtain the following program:
\begin{gather*}
    \text{Maximize $\theta$},\ \text{s.t.}\\
    \forall i\in [n], j\in [m], u_{ij} \lambda_i- \alpha_{ij}\le p_{ij} s_j;\\
    \forall i\in [n],\ (u_i+ \eta)\lambda_i-\sum_{j\in C}\alpha_{ij}\ge \theta;\\
    \forall j\in [m],\ \sum_{i} p_{ij}\le R; \\
    \forall i \in [n], j \in [m], \ \lambda_i, \alpha_{ij}, p_{ij} \ge 0.
\end{gather*}
The optimal $\theta$ corresponds to $\theta_{\ell}$, completing the proof.
\end{proof}

\subsection{Integer Allocations}
We next consider the case where candidates need to be allocated integrally. For committee $W \subseteq C$, let $U_i(W) = \sum_{j \in W} u_{ij}$. We assume utilities are integers. We define approximate Lindahl priceability as follows:

\begin{definition}[$\theta$-Approximate Lindahl Priceability]
\label{def:approxlindahl3}
A committee $W$ with $\sum_{j \in W} s_j \le k$ is \emph{$\theta$-approximate Lindahl priceable} if there exists a price system $\{p_{ij}\}$ from each voter $v_i$ to each candidate $c_j$, such that the following two conditions hold:
\begin{itemize}
    \item [1.] $\forall j\in [m],\ \sum_i p_{ij} \le R$.
 %   \item [2.] $\forall i\in [n],\ \sum_{j \in W} p_{ij} s_j \le 1$.
    \item [2.] $\forall i\in [n],\ T\subseteq C$, if $U_i(T) \ge U_i(W) +1$, then $\sum_{j \in T} p_{ij} s_j > \theta.$
\end{itemize}
\end{definition}

Given a committee $W$, let $\theta_{\ell}$ be the smallest $\theta$ for which the committee is not $\theta$-approximately Lindahl priceable. We show the following theorem:

\begin{theorem}
For any constant $\epsilon > 0$, a $(2 + \epsilon)$-approximation to $\theta_{\ell}$ can be computed in polynomial time.
\end{theorem}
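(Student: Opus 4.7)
The plan is to reduce the integer Lindahl priceability auditing problem to a knapsack-cover LP, exploiting the factor-$2$ integrality gap of min-knapsack under the knapsack-cover relaxation combined with the LP-duality argument from the proof of \cref{thm:lindahl}.

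For fixed prices $\{p_{ij}\}$, the minimum cost deviation for voter $i$ is the integer optimum $\phi_i(p)$ of the min-knapsack $\min \sum_j p_{ij} s_j \gamma_{ij}$ subject to $\sum_j u_{ij} \gamma_{ij} \ge U_i(W) + 1$ and $\gamma_{ij} \in \{0,1\}$. First, I would strengthen the natural LP relaxation of this problem using knapsack cover inequalities exactly as in \cref{sec:general}: for each $A \subseteq [m]$ with $U_i(A) \le U_i(W)$, impose $\sum_{j \notin A} u_{ijA} \gamma_{ij} \ge \h U_i(A)$ with $\h U_i$ and $u_{ijA}$ defined as there. Letting $\phi_i^{LP}(p)$ denote the LP optimum, the classical analysis of knapsack cover inequalities (Carr--Fleischer--Leung--Phillips) gives $\phi_i^{LP}(p) \le \phi_i(p) \le 2\,\phi_i^{LP}(p)$, so the LP value is a factor-$2$ proxy for the true min-knapsack cost seen by voter $i$.

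Next, I would combine across voters via LP duality exactly as in \cref{thm:lindahl}. Taking the dual of each per-voter knapsack-cover LP, stacking them, and adjoining the price constraints $\sum_i p_{ij} \le R$ yields a single LP maximizing $\theta$ subject to $\phi_i^{LP}(p) \ge \theta$ for every $i$; call its optimum $\theta_{\ell}^{LP}$. By the sandwich above, $\theta_{\ell}^{LP} \le \theta_{\ell} \le 2\,\theta_{\ell}^{LP}$. Taking the dual of this combined LP (now with exponentially many columns $\lambda_{iA}$) recovers precisely the knapsack-cover program from \cref{sec:general} after the substitution $x_j \mapsto s_j\,x_j$. Consequently the same Ellipsoid-plus-dynamic-programming separation oracle from \cref{sec:general} applies and computes a value $\tilde\theta_{\ell}^{LP}$ satisfying $\tilde\theta_{\ell}^{LP} \le \theta_{\ell}^{LP} \le (1+\epsilon')\tilde\theta_{\ell}^{LP}$ in polynomial time for any constant $\epsilon'>0$.

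Chaining yields $\tilde\theta_{\ell}^{LP} \le \theta_{\ell} \le 2(1+\epsilon')\tilde\theta_{\ell}^{LP}$, which after setting $\epsilon' = \epsilon/2$ gives the claimed $(2+\epsilon)$-approximation. The step that will require the most care is verifying that the dual of the combined per-voter knapsack-cover LPs is literally the LP from \cref{sec:general}, so that the separation-oracle machinery can be reused verbatim and the exponentially many knapsack-cover constraints are handled without loss. The factor-$2$ loss itself is inherent to the knapsack-cover relaxation; bypassing it would require an FPTAS-style strengthening of the per-voter knapsack (for instance via rounding utilities), which seems substantially more delicate and is not needed for the stated $(2+\epsilon)$ bound.
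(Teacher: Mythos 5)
Your proposal follows essentially the same route as the paper's proof: a per-voter min-knapsack LP strengthened with knapsack cover inequalities (losing the Carr et al.\ factor of $2$ against the integer optimum), stacked via the duality argument of \cref{thm:lindahl} with the price constraints $\sum_i p_{ij} \le R$, dualized again to recover the LP of \cref{sec:general}, and solved via the Ellipsoid method with the $(1+\epsilon)$-approximate dynamic-programming separation oracle. Your explicit chaining of the factor $2$ with the $(1+\epsilon')$ separation loss to get $2(1+\epsilon/2) = 2+\epsilon$ is a slightly more careful accounting than the paper's, but the argument is the same.
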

\begin{proof}
Fix a committee $W$. Following the proof in \cref{thm:lindahl}, let $u_i = U_i(W) = \sum_{j \in W} u_{ij}$.  Fix the prices $p_{ij}$ achieving the Lindahl priceability ratio. Then the minimum entitlement needed for a voter $i$ to deviate to a committee of utility at least $u_i + 1$ is captured by the following knapsack cover linear program, where for any $S \subseteq C$, we define $\h U_i(S) = \max(0,u_i +1 - U_i(S))$;  $u_{ijS} = \min(u_{ij}, \h U_i(S))$; and $A_i = \{j \in C \mid u_{ij} > 0\}$:
\begin{gather*}
    \text{Minimize $ \sum_{j\in A_i} p_{ij} s_j \gamma_{ij}$},\ \text{s.t.}\\
    \forall i\in [n], S \subseteq A_i,\ \sum_{j\in A_i \setminus S} u_{ijS} \gamma_{ij}\ge \h U_i(S); \\
    \forall j\in C,\ \gamma_{ij}\le 1; \\
    \forall j\in C,\ \gamma_{ij}\ge 0.
\end{gather*}
Though this LP may allocate $\gamma_{ij}$ fractionally, it is known~\cite{Carr} that the optimum objective is within a factor of $2$ of the integer optimum. As in the proof of \cref{thm:lindahl}, now take the dual, put these duals together for all voters adding the constraint $\sum_i p_{ij} \le R$ for all voters $i \in [n]$, and take the dual again. This yields the following LP, which is identical to that in \cref{sec:general}.
\begin{gather*}
    \mbox{Minimize } R\cdot \sum_{j=1}^m s_j x_j,\ \text{s.t.}\\
    \forall i\in [n], S \subseteq [m],\ \sum_{j\in A_i \setminus S} u_{ijS} y_{ij}\ge z_i\cdot \h U_i(S);\\
    \forall i\in [n],\ \forall j\in A_i,\ y_{ij}\le x_j;\\
    \forall i\in [n],\ \forall j\in A_i,\ y_{ij}\le z_i; \\
    \sum_{i=1}^n z_i \ge 1; \\
    \forall i \in [n], j \in [m],\ x_j, z_i, y_{ij} \ge 0.
\end{gather*}
This yields a $2$-approximation to $\theta_{\ell}$. The LP above has polynomially many variables but exponentially many constraints. It is shown in~\cite{Carr} that given a setting of variables, the most violated constraint can be computed to a $(1+\epsilon)$-approximation in polynomial time for any constant $\epsilon > 0$. This implies the LP can be solved in polynomial time to a $(1+\epsilon)$-approximation via the Ellipsoid method. This completes the proof of the theorem.
\end{proof}

\section{Approximately Auditing Sub-core: Proof of \cref{thm:subcore}}
\label{app:subcore}
We can compute $\theta_{sc}$ by the following Mathematical Program. 
\begin{gather*}
    \mbox{Minimize}\ \ \  R\cdot  \frac{\sum_{j=1}^m x_j}{\sum_{i=1}^n z_i}, \ \text{s.t.}\\
    \forall i\in [n], j\in W \cap A_i, x_j \ge z_i,\\  
    \forall i\in [n], \sum_{j\in A_i\setminus W} x_j\ge z_i,\\
    \forall i\in [n], z_i \in \{0, 1\},\\
    \forall j\in [m], x_j \in \{0, 1\}.
\end{gather*}
The first constraint means that if a voter deviates, all her approved candidates in the original committee must also be selected in the deviating committee. The second constraint indicates that at least one more candidate in the voter's approval set must be present in the deviating committee.  As before,  if we remove the final two integral constraints, the Program can be solved in polynomial time by replacing the objective function with $\sum_{j\in C} x_j$ while adding the constraint $\sum_{i\in [m]} z_i \ge 1$. Suppose after removing the final two constraints, the optimal solution is $\{x_j\}$ and $\{z_i\}$. 

\paragraph{The $O(\log m)$ Approximation.} We scale the variables so that there is some $i$ with $z_i = 1$. This means that $\sum_{j \in C \setminus W} x_j \ge 1$. As a first step, we set $x'_j = \max(x_j, 1/m^2)$ for $j \in C \setminus W$; this will not change the LP objective by more than a constant factor.

The rest of the rounding scheme is as follows:
\begin{enumerate}
    \item Choose $\alpha \in [0,1]$ uniformly at random. 
    \item If $z_i \ge \alpha$, we set $\h z_i = 1$, else we set it to $0$.
    \item For $j \in W$, if $x_j \ge \alpha$, set $\h x_j = 1$, else $\h x_j = 0.$
    \item For $j \notin W$, set $\h x_j = 1$ with probability $\min(1, x_j/\alpha)$, else set $\h x_j = 0$.
    \item If $\h z_i = 1$, check if $\h x_j = 1$ for some $j \in A_i \setminus W$. If so, set $\doublehat{z_i} = 1$, else set it to zero.
\end{enumerate} 

\paragraph{Analysis.} First note that if $\h z_i = 1$, then $\h x_j = 1$ for all $j \in W \cap A_i$. Further, repeating the analysis in \cref{sec:roundproof} shows that $\mathbb{E}[\sum_j \h x_j] = O(\log m) \sum_j x_j$. 

Suppose $\h z_i = 1$. Then $z_i \ge \alpha$. If there is some $j \in C \setminus W$ with $x_j \ge \alpha$, then $\h x_j = 1$ and the second constraint is satisfied for this voter, so that $\doublehat{z}_i = 1$. Otherwise, we can pretend all $x_j$ for $j \in A_i \setminus W$ are first scaled up by $1/\alpha$ and then randomly rounded. After scaling up, we have $\sum_{j \in A_i \setminus W} x_j \ge 1$, so that the randomized rounding chooses sets one of these $\h x_j = 1$ with probability at least $(1-1/e)$. Therefore, we have $\mathbb{E}[\doublehat{z_i}] \ge (1-1/e) z_i$. Proceeding as in \cref{sec:roundproof}, this shows an $O(\log m)$ approximation. 

The proof of the $O(\log n)$ approximation follows similar lines to that in \cref{sec:roundproof}, and is omitted.

\end{document}